\def\CG{{\cal G}}
\def\CSUM{{\mathrm{S}\textsc{um}}}
\def\RP{{\mathbb{R}_{\geq 0}}}
\begin{document}

\title{{A Unifying Tool for Bounding the Quality of Non-Cooperative Solutions in Weighted Congestion Games}\thanks{This work was partially supported by the PRIN 2008 research project COGENT ``Computational and game-theoretic aspects of uncoordinated networks'' funded by the Italian Ministry of University and Research.}}
\author{Vittorio Bil\`o\institute{Department of Mathematics, University of Salento, Provinciale Lecce-Arnesano, P.O. Box 193, 73100 Lecce, Italy. Email: {\tt vittorio.bilo@unisalento.it}}}

\maketitle

\begin{abstract}
We present a general technique, based on a primal-dual formulation, for analyzing the quality of self-emerging solutions in weighted congestion games. With respect to traditional combinatorial approaches, the primal-dual schema has at least three advantages: first, it provides an analytic tool which can always be used to prove tight upper bounds for all the cases in which we are able to characterize exactly the polyhedron of the solutions under analysis; secondly, in each such a case the complementary slackness conditions give us an hint on how to construct matching lower bounding instances; thirdly, proofs become simpler and easy to check. For the sake of exposition, we first apply our technique to the problems of bounding the prices of anarchy and stability of exact and approximate pure Nash equilibria, as well as the approximation ratio of the solutions achieved after a one-round walk starting from the empty strategy profile, in the case of affine latency functions and we show how all the known upper bounds for these measures (and some of their generalizations) can be easily reobtained under a unified approach. Then, we use the technique to attack the more challenging setting of polynomial latency functions. In particular, we obtain the first known upper bounds on the price of stability of pure Nash equilibria and on the approximation ratio of the solutions achieved after a one-round walk starting from the empty strategy profile for unweighted players in the cases of quadratic and cubic latency functions. We believe that our technique, thanks to its versatility, may prove to be a powerful tool also in several other applications.
\end{abstract}

\

\noindent{\bf Keywords:} Price of anarchy and stability, Performance of one-round walks, (Approximate) Nash equilibria, Congestion games, Primal-dual analysis.

\section{Introduction}
Characterizing the quality of self-emerging solutions in non-cooperative systems is one of the leading research direction in Algorithmic Game Theory. Given a game $\cal G$, a social function $\cal F$ measuring the quality of any solution which can be realized in $\cal G$, and the definition of a set $\cal E$ of certain self-emerging solutions, we are asked to bound the ratio ${\cal Q}({\cal G},{\cal E},{\cal F}):={\cal F}(K)/{\cal F}(O)$, where $K$ is some solution in ${\cal E}({\cal G})$ (usually either the worst or the best one with respect to $\cal F$) and $O$ is the solution optimizing $\cal F$.

In the last ten years, there has been a flourishing of contribution in this topic and, after a first flood of unrelated results, coming as a direct consequence of the fresh intellectual excitement caused by the affirmation of this new research direction, a novel approach, aimed at developing a more mature understanding of which is the big picture standing behind these problems and their solutions, is now arising.

In such a setting, Roughgarden \cite{R09} proposes the so-called ``smoothness argument" as a unifying technique for proving tight upper bounds on ${\cal Q}({\cal G},{\cal E},{\cal F})$ for several notions of self-emerging solutions $\cal E$, when $\cal G$ satisfies some general properties, $K$ is the worst solution in $\cal E(\cal G)$ and $\cal F$ is defined as the sum of the players' payoffs. He also gives a more refined interpretation of this argument and stresses also its intrinsic limitations, in a subsequent work with Nadav \cite{NR10}, by means of a primal-dual characterization which shares lot of similarities with the primal-dual framework we provide in this paper. Anyway, there is a subtle, yet substantial, difference between the two approaches and we believe that the one we propose is more general and powerful. Both techniques formulate the problem of bounding ${\cal Q}({\cal G},{\cal E},{\cal F})$ via a (primal) linear program and, then, an upper bound is achieved by providing a feasible solution for the related dual program. But, while in \cite{NR10} the variables defining the primal formulation are yielded by the strategic choices of the players in both $K$ and $O$ (as one would expect), in our technique the variables are the parameters defining the players' payoffs in $\cal G$, while $K$ and $O$ play the role of fixed constants. As it will be clarified later, such an approach, although preserving the same degree of generality, applies to a broader class of games and allows for a simple analysis facilitating the proof of tight results. In fact, as already pointed out in \cite{NR10}, the Strong Duality Theorem assures that each primal-dual framework can always be used to derive the exact value of ${\cal Q}({\cal G},{\cal E},{\cal F})$ provided that, for any solution $S$ which can be realized in $\cal G$, ${\cal F}(S)$ can be expressed though linear programming and
\begin{itemize}
\item[$\bullet$] the polyhedron defining $\cal E(\cal G)$ can be expressed though linear programming, when $K$ is the worst solution in $\cal E(\cal G)$ with respect to $\cal F$,
\item[$\bullet$] the polyhedron defining $K$ can be expressed though linear programming, when $K$ is the best solution in $\cal E(\cal G)$ with respect to $\cal F$.
\end{itemize}
Moreover, in all such cases, by applying the ``complementary slackness conditions", we can figure out which pairs of solutions $(K,O)$ yield the exact value of ${\cal Q}({\cal G},{\cal E},{\cal F})$, thus being able to construct quite systematically matching lower bounding instances.

We consider three sets of solutions $\cal E$, namely,
\begin{itemize}
\item[$\bullet$] pure Nash equilibria (PNE), that is, the set of outcomes in which no player can improve her situation by unilaterally changing the adopted strategy;
\item[$\bullet$] $\epsilon$-approximate pure Nash equilibria ($\epsilon$-PNE), that is, outcomes in which no player can improve her situation of more than a factor $\epsilon$ by unilaterally changing the adopted strategy (by definition, each $0$-PNE is a PNE);
\item[$\bullet$] solutions achieved after a one-round walk starting from the empty strategy profile \cite{MV04}, that is, the set of outcomes which arise when, starting from an initial configuration in which no player has done any strategic choice yet, each player is asked to select, sequentially and according to a given ordering, her best possible current strategy.
\end{itemize}
For $\cal E(\cal G)$ being the set of $\epsilon$-PNE of $\cal G$, ${\cal Q}({\cal G},{\cal E},{\cal F})$ is called the approximate price of anarchy of $\cal G$ ($\epsilon$-PoA$(\cal G)$) when $K$ is the worst solution in $\cal E(\cal G)$, while it is called the approximate price of stability of $\cal G$ ($\epsilon$-PoS$(\cal G)$) when $K$ is the best solution in $\cal E(\cal G)$. For $\epsilon=0$, that is for the set of PNE of $\cal G$, the terms price of anarchy (PoA$(\cal G)$) and price of stability (PoS$(\cal G)$) are used. For $\cal E(\cal G)$ being the set of solutions achieved after one-round walks starting from the empty strategy profile in $\cal G$, $K$ is always defined as the worst solution in $\cal E(\cal G)$ and ${\cal Q}({\cal G},{\cal E},{\cal F})$ is denoted by Apx$_\emptyset^1(\cal G)$.

\

\noindent{\bf Our Contribution.} Our method reveals to be particularly powerful when applied to the class of weighted congestion games. In these games there are $n$ players competing for a set of resources. Each player has an associated weight denoting how much the player congestions a resource when using it, while each resource has an associated latency function modeling the cost incurred by each player when using it. The latency function of each resource only depends on the total weight of the players using it (the overall congestion of the resource) and the cost experienced by each player in a given strategy profile is then defined as the sum of the costs incurred by the player on each of the used resources. These games have a particular appeal since, from one hand, they are general enough to model a variety of situations arising in real life applications and, from the other one, they are structured enough to allow a systematic theoretical study. For example, for the case in which all players have the same weight (the so-called unweighted players), Rosenthal \cite{R73} proved through a potential function argument that PNE are always guaranteed to exist, while general weighted congestion games are guaranteed to possess PNE if and only if the latency functions are either affine or exponential \cite{FKS05,HK10,HKM09,PS06}.

In order to illustrate the versatility and usefulness of our technique, we first consider the well-known and studied case in which the latency functions are affine and $\cal F$ is the sum of the players' payoffs and show how all the known results (as well as some of their generalizations) can be easily reobtained under a unifying approach. They are depicted in Figure \ref{fig1}. Note that, since all the upper bounds for $\epsilon$-PoA and $\epsilon$-PoS are expressed as a function of $\epsilon$, results concerning PNE can be derived by setting $\epsilon=0$. For $\epsilon$-PoA and $\epsilon$-PoS in the unweighted case and for Apx$_\emptyset^1$, we reobtain known upper bounds with significatively shorter and simpler proofs (where, by simple, we mean that only basic notions of calculus are needed in the arguments), while for the generalizations of the $\epsilon$-PoA and the $\epsilon$-PoS in the weighted case, we give the first upper bounds known in the literature.
\begin{figure}\label{fig1}
\begin{center}
\begin{tabular}{|c|c|c|c|}
  \hline
  \ \ Measure\ \ \ & Unweighted & Weighted & \ \ Conditions\ \ \ \\\hline\hline
  $\epsilon$-PoA & $\ \frac{(1+\epsilon)(z^2+3z+1)}{2z-\epsilon}$\ \ \cite{CKS10}\ \ & $\psi^2$ & $\ \epsilon\geq 0$\ \ \\\hline
  $\epsilon$-PoS & $\frac{1+\sqrt{3}}{\epsilon+\sqrt{3}}$\ \ \cite{CKS10} & $\frac{2}{1+\epsilon}$ & $\epsilon\in [0;1]$\\\hline
  Apx$^1_\emptyset$ & $2+\sqrt{5}$\ \ \cite{CMS06} & $\ 4+2\sqrt{3}$\ \ \cite{CMS06}\ \ & $-$\\\hline
\end{tabular}\caption{Upper bounds on the quality of some non-cooperative solutions in weighted and unweighted linear congestion games. Here, $\psi=\frac{1+\epsilon+\sqrt{\epsilon^2+6\epsilon+5}}{2}$ and $z=\left\lfloor\psi\right\rfloor$.}
\end{center}
\end{figure}

After having introduced the technique, we show how it can be used to attack the more challenging case of polynomial latency functions. In such a case, the PoA and $\epsilon$-PoA was already studied and characterized in \cite{ADGMS06} and \cite{CKS10}, respectively, and both papers pose the achievement of upper bounds on the PoS and $\epsilon$-PoS as a major open problem in the area. For unweighted players and quadratic and cubic latency functions, we easily achieve the upper bounds on PoS and Apx$^1_\emptyset$ reported in Figure \ref{fig2}. Extensions to $\epsilon$-PoS and weighted players are left to future work.
\begin{figure}
\begin{center}
\begin{tabular}{|c|c|c|}
  \hline
  \ \ Measure\ \ \ & \ \ Quadratic Latencies \ \ \ & \ \ Cubic Latencies \ \ \ \\\hline\hline
  PoS & $2.362$ & $3.321$ \\\hline
  PoA & $\frac{115}{12}\approx 9.583$\ \ \cite{ADGMS06} & $\frac{1163}{28}\approx 41.535$\ \ \cite{ADGMS06} \\\hline
  Apx$^1_\emptyset$ & $37.5888$ & $\frac{17929}{34}\approx 527.323$ \\\hline
\end{tabular}\caption{Upper bounds on the quality of some non-cooperative solutions in unweighted congestion games with quadratic and cubic latency functions.}\label{fig2}
\end{center}
\end{figure}

What we would like to stress here is that, more than the novelty of the results achieved in this paper, what makes our method significative is its capability of being easily adapted to a variety of particular situations and we are more than sure of the fact that it will prove to be a powerful tool to be exploited in the analysis of the efficiency achieved by different classes of self-emerging solutions in other contexts as well. To this aim, in the Appendix, we show how the method applies also to other social functions, such as the maximum of the players' payoffs (Subsection \ref{secmax}), and to other subclasses of congestion games such as resource allocation games with fair cost sharing (Subsection \ref{secfair}). Note that, in the latter case, as well as in the case of polynomial latency functions, the primal-dual technique proposed in \cite{NR10} cannot be used, since the players' costs are not linear in the variables of the problem.

\

\noindent{\bf Related Works.} The study of the quality of self-emerging solutions in non-cooperative systems initiated with the seminal papers of Koutsoupias and Papadimitriou \cite{KP99} and Anshelevich et al. \cite{ADKTWR04} which introduced, respectively, the notions of price of anarchy and price of stability.

Lot of results have been achieved since then and we recall here only the ones which are closely related to our scenario of application, that is, weighted congestion games with polynomial latency functions.

For affine latency functions and $\cal F$ defined as the sum of the players' payoffs, Christodoulou and Koutsoupias \cite{CK05} show that the PoA is exactly $5/2$ for unweighted players, while Awerbuch et al. \cite{AAE05} show that it rises to exactly $(3+\sqrt{5})/2$ in the weighted case. These bounds keep holding also when considering the price of anarchy of generalizations of PNE such as mixed Nash equilibria and correlated equilibria, as shown by Christodoulou and Koutsoupias in \cite{CK05b}. Similarly, for polynomial latency functions with maximum degree equal to $d$, Aland et al. \cite{ADGMS06} prove that the price of anarchy of all these equilibria is exactly $\Phi_d^{d+1}$ in the weighted case and exactly $\frac{(k+1)^{2d+1}-k^{d+1}(k+2)^d}{(k+1)^{d+1}-(k+2)^d+(k+1)^d-k^{d+1}}$ in the unweighted case, where $\Phi_d$ is the unique non-negative real solution to $(x+1)^d=x^{d+1}$ and $k=\lfloor\Phi_d\rfloor$. These interdependencies have been analyzed by Roughgarden \cite{R09} who proves that unweighted congestion games with latency functions constrained in a given set belong to the class of games for which a so-called ``smoothness argument" applies and that such a smoothness argument directly implies the fact that the prices of anarchy of PNE, mixed Nash equilibria, correlated equilibria and coarse correlated equilibria are always the same when $\cal F$ is the sum of the players' payoffs. Such a result has been extended also to the weighted case by Bhawalkar et al. in \cite{BGR10}.

For the alternative model in which $\cal F$ is defined as the maximum of the players' payoffs, Christodoulou and Koutsoupias \cite{CK05} show a PoA of $\Theta(\sqrt{n})$ in the case of affine latency functions.

For the PoS, only the case of unweighted players, affine latency functions and $\cal F$ defined as the sum of the players' payoffs, has been considered so far. The upper and lower bounds achieved by Caragiannis et al. \cite{CFKKM06} and by Christodoulou and Koutsoupias \cite{CK05b}, respectively, set the PoS to exactly $1+1/\sqrt{3}$. Clearly, being the PoS a best-case measure and being the set of PNE properly contained in the set of all the other equilibrium concepts, again we have a unique bound for PNE and all of its generalizations.

%Recent advances in the study of the complexity of equilibria essentially ruled out the chance of designing efficient algorithms computing pure or mixed Nash equilibria in congestion games even for affine latency functions and in presence of unweighted players \cite{ARV06,CD06,DGP05,DP05,FPT04}. Thus, self-emerging solutions alternative to Nash equilibria have been taken into account. Among them, we mention and address $\epsilon$-PNE and solutions achieved after a bounded number of best-responses.

As to $\epsilon$-PNE, in the case of unweighted players, polynomial latency functions and $\cal F$ defined as the sum of the players' payoffs, Christodoulou et al. \cite{CKS10} show that the $\epsilon$-PoA is exactly $\frac{(1+\epsilon)((z+1)^{2d+1}-z^{d+1}(z+2)^d)}{(z+1)^{d+1}-z^{d+1}-(1+\epsilon)((z+2)^d-(z+1)^d)}$, where $z$ is the maximum integer satisfying $\frac{z^{d+1}}{(z+1)^d}<1+\epsilon$, and that, for affine latency functions, the $\epsilon$-PoS is at least $\frac{2(3+\epsilon+\theta\epsilon^2+3\epsilon^3+2\epsilon^4+\theta+\theta\epsilon)}
{6+2\epsilon+5\theta\epsilon+6\epsilon^3+4\epsilon^4-\theta\epsilon^3+2\theta\epsilon^2}$, where $\theta=\sqrt{3\epsilon^3+3+\epsilon+2\epsilon^4}$, and at most $(1+\sqrt{3})/(\epsilon+\sqrt{3})$. %However, as we discuss later in the paper, both proofs of the latter upper bound suffer of a basic flaw.

%The study of the performance of the solutions achieved after a bounded number of best-responses was initiated in \cite{MV04} where the notions
%of covering walks, one-round walks and $k$-round walks were defined. All of them are sequences of best-responses, the difference is that in covering walks each player must be allowed to perform at least a best-response, in one-round
%walks each player must be allowed to perform exactly one best-response, while $k$-round walks are simply the concatenation of $k$ disjoint one-round walks.
Finally, for affine latency functions and $\cal F$ defined as the sum of the players' payoffs, Apx$^1_\emptyset$ has been shown to be exactly $2+\sqrt{5}$ in the unweighted case as a consequence of the upper and lower bounds provided, respectively, by Christodoulou et al. \cite{CMS06} and by Bil\`o et al. \cite{BFFM09}, while, for weighted players, Caragiannis et al. \cite{CFKKM06} give a lower bound of $3+2\sqrt{2}$ and Christodoulou et al. \cite{CMS06} give an upper bound of $4+2\sqrt{3}$. For $\cal F$ being the maximum of the players' payoffs, Bil\`o et al. \cite{BFFM09} show that Apx$^1_\emptyset$ is $\Theta(\sqrt[4]{n^3})$ in the unweighted case and affine latency functions.

\

\noindent{\bf Paper Organization.} In next section, we give all the necessary definitions and notation, while in Section \ref{sectech1} we briefly outline the primal-dual method. Then, in Section \ref{secaffine1} we illustrate how it applies to affine latency functions and, finally, in Section \ref{secpoly1} we use it to address the case of quadratic and cubic latency functions. Additional material can be found in the Appendix.

\section{Definitions}
For a given integer $n>0$, we denote as $[n]$ the set $\{1,\ldots,n\}$.

A \emph{weighted congestion game}
$\CG = \left([n], E, (\Sigma_i)_{i \in [n]}, (\ell_e)_{e \in E}, (w_i)_{i
\in [n]}\right)$ is a non-cooperative strategic game in which there is a set $E$ of $m$ \emph{resources} to be shared among the
players in $[n]$. Each player $i$ has an associated weight $w_i\in\RP$ and the special case in which $w_i=1$ for any $i\in [n]$ is called the {\em unweighted case}. The strategy set $\Sigma_i$, for any player $i\in [n]$,
is a subset of resources, i.e., $\Sigma_i \subseteq 2^E$. The set $\Sigma=\times_{i\in [n]}\Sigma_i$ is called the set of {\em strategy profiles} (or {\em solutions}) which can be realized in $\CG$. Given a
strategy profile $S = (s_1, s_2,\ldots, s_n)\in\Sigma$ and a resource $e\in E$,
the sum of the weights of all the players using $e$  in $S$, called the \emph{congestion} of
$e$ in $S$, is denoted by ${\cal L}_e(S)=\sum_{i\in [n]:e\in s_i}w_i$. A
\emph{latency function} $\ell_e : \RP \mapsto \RP$ associates each resource $e\in E$ with a
{\em latency} depending on the congestion of
$e$ in $S$. The \emph{cost} of player $i$ in the strategy profile $S$ is given by
$c_i(S)=\sum_{e \in s_i}\ell_e({\cal L}_e(S))$.
This work is concerned only with \emph{polynomial latency} functions of maximum degree $d$, i.e., the case in which $\ell_e(x) = \sum_{i=0}^d\alpha_{e,i}x^d$ with $\alpha_{e,i}\in \RP$, for any $e\in E$ and $0\leq i\leq d$.

Given a strategy profile $S \in \Sigma$ and a strategy $t\in\Sigma_i$ for player $i$, we denote with $(S_{-i}\diamond t)=(s_1,\ldots,s_{i-1},t,s_{i+1},\ldots,s_n)$ the strategy profile obtained from $S$ when player $i$ changes unilaterally her strategy from $s_i$ to $t$. We say that $S'=(S_{-i}\diamond t)$ is an {\em improving deviation} for player $i$ in $S$ if $c_i(S')<c_i(S)$. Given an $\epsilon\geq 0$, a strategy profile $S$ is an {\em $\epsilon$-approximate pure Nash equilibrium} ($\epsilon$-PNE) if, for any $i\in [n]$ and for any $t\in\Sigma_i$, it holds $c_i(S)\leq (1+\epsilon)c_i(S_{-i}\diamond t)$. For $\epsilon=0$, the set of $\epsilon$-approximate pure Nash equilibria collapses to that of {\em pure Nash equilibria} (PNE), that is, the set of strategy profiles in which no player possesses an improving deviation.

Consider the social function $\CSUM:\Sigma\mapsto\RP$ defined as the sum of the players' costs, that is, $\CSUM(S)=\sum_{i\in [n]}c_i(S)$ and let $S^*$ be the strategy profile minimizing it. Given an $\epsilon\geq 0$ and a weighted congestion game $\CG$, let ${\cal E}(\CG)$ be the set of $\epsilon$-approximate Nash equilibria of $\CG$. The $\epsilon$-approximate price of anarchy of $\CG$ is defined as $\epsilon$-PoA$(\CG)=\max_{S\in {\cal E}(\CG)}\left\{\frac{\CSUM(S)}{\CSUM(S^*)}\right\}$, while the $\epsilon$-approximate price of stability of $\CG$ is defined as $\epsilon$-PoS$(\CG)=\min_{S\in {\cal E}(\CG)}\left\{\frac{\CSUM(S)}{\CSUM(S^*)}\right\}$.

Given a strategy profile $S$ and a player $i\in [n]$, a strategy profile $t^*\in\Sigma_i$ is a {\em best-response} for player $i$ in $S$ if it holds $c_i(S_{-i}\diamond t^*)\leq c_i(S_{-i}\diamond t)$ for any $t\in\Sigma_i$. Let $S^\emptyset$ be the {\em empty strategy profile}, i.e., the profile in which no player has performed any strategic choice yet. A one-round walk starting from the empty strategy profile is an $(n+1)$-tuple of strategy profiles $W=(S^W_0,S^W_1,\ldots,S^W_n)$ such that $S^W_0=S^\emptyset$ and, for any $i\in [n]$, $S^W_i=(S^W_{i-1}\diamond t^*)$, where $t^*$ is a best-response for player $i$ in $S^W_{i-1}$. The profile $S^W_n$ is called the solution achieved after the one-round walk $W$. Clearly, depending on how the players are ordered from 1 to $n$ and on which best-response is selected at step $i$ when more than one best-response is available to player $i$ in $S^W_{i-1}$, different one-round walks can be generated. Let ${\cal W}(\CG)$ denote the set of all possible one-round walks which can be generated in game $\CG$. The approximation ratio of the solutions achieved after a one-round walk starting from the empty strategy profile in $\CG$ is defined as Apx$^1_\emptyset(\CG)=\max_{W\in {\cal W}(\CG)}\left\{\frac{\CSUM(S^W_n)}{\CSUM(S^*)}\right\}$.

\section{The Primal-Dual Technique}\label{sectech1}
Fix a weighted congestion game $\CG$, a social function $\cal F$ and a class of self-emerging solutions $\cal E$. Let $O=(s_1^O,\ldots,s_n^O)$ be the strategy profile optimizing $\cal F$ and $K=(s_1^K,\ldots,s_n^K)\in\cal E(\CG)$ be the worst-case solution in $\cal E(\CG)$ with respect to $\cal F$. For any $e\in E$, we set, for the sake of brevity, $O_e={\cal L}_e(O)$ and $K_e={\cal L}_e(K)$.

Since $O$ and $K$ are fixed, we can maximize the inefficiency yielded by the pair $(K,O)$ by suitably choosing the coefficients $\alpha_{e,i}$, for each $e\in E$ and $0\leq i\leq d$, so that ${\cal F}(K)$ is maximized, ${\cal F}(O)$ is normalized to one and $K$ meets all the constraints defining the set $\cal E(\cal G)$. For the sets $\cal E$ and social functions $\cal F$ considered in this paper, this task can be easily achieved by creating a suitable linear program $LP(K,O)$.

By providing a feasible solution for the dual program $DLP(K,O)$, we can obtain an upper bound on the optimal solution of $LP(K,O)$. Our task is to uncover, among all possibilities, the pair $(K^*,O^*)$ yielding the highest possible optimal solution for $LP(K,O)$. To this aim, the study of the dual formulation plays a crucial role: if we are able to detect the nature of the ``worst-case" dual constraints, then we can easily figure out the form of the pair $(K^*,O^*)$ maximizing the inefficiency of the class of solutions $\cal E$. Clearly, by the complementary slackness conditions, if we find the optimal dual solution, then we can quite systematically construct the matching primal instance by choosing a suitable set of players and resources so as to implement all the tight dual constraints. This task is much more complicated to be achieved in the weighted case, because, once established the values of the congestions $K_e^*$ and $O_e^*$ for any $e\in E$, there are still infinite many ways to split them among the players using resource $e$ in both $K$ and $O$. For such a reason, discovering matching lower bounding instances for the weighted case reveals to be much harder than for the unweighted one.

\section{Application to Affine Latency Functions}\label{secaffine1}
In order to easily illustrate our primal-dual technique, in this section we consider the well-known and studied case of affine latency functions and social function $\CSUM$ and show how the results for $\epsilon$-PoA, $\epsilon$-PoS and Apx$^1_\emptyset$ already known in the literature can be reobtained in a unified manner for both weighted and unweighted players.

In order to reduce the number of variables we need to consider in our linear programs, we make use of the following simplificative arguments.

We say that the game $\CG'=([n],E',\Sigma',\ell',w)$ is equivalent to the game $\CG=([n],E,\Sigma,\ell,w)$ if there exists a one-to-one mapping $\varphi_i:\Sigma_i\mapsto\Sigma'_i$ for any $i\in [n]$ such that for any strategy profile $S=(s_1,\ldots,s_n)\in\Sigma$, it holds $c_i(S)=c_i(\varphi_1(s_1),\ldots,\varphi_n(s_n))$ for any $i\in [n]$.

\begin{lemma}\label{lemma1}
For each weighted congestion game with affine latency functions $\CG=([n],E,\Sigma,\ell,w)$ there always exists an equivalent weighted congestion game with affine latency functions $\CG'=([n],E',\Sigma',\ell',w)$ such that $\ell'_e(x) = \alpha_{e,1} x:=\alpha_e x$ for any $e\in E'$.
\end{lemma}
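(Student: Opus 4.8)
The plan is to realize the constant terms of the affine latencies through purely linear ``private'' resources. Write each latency as $\ell_e(x)=\alpha_e x+\beta_e$ with $\alpha_e,\beta_e\in\RP$; the only obstruction to the desired form $\ell'_e(x)=\alpha_e x$ is the constant $\beta_e$. The crucial observation is that, in any strategy profile $S$, the term $\beta_e$ contributes exactly $\beta_e$ to the cost of every player selecting $e$, independently of the congestion ${\cal L}_e(S)$. Hence I want to reproduce this congestion-independent contribution by a \emph{linear} function evaluated on a resource whose congestion is a fixed, known quantity.

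To this end, I would first build $E'$ by keeping a copy of every resource of $E$ with its constant term zeroed out (i.e. $\ell'_e(x)=\alpha_e x$ for $e\in E\cap E'$) and, for every resource $\widetilde{e}$ with $\beta_{\widetilde{e}}>0$ and every player $i$ that can select $\widetilde{e}$ (i.e. $i\in N_{\widetilde{e}}:=\{i\in[n]:\exists s\in\Sigma_i,\ \widetilde{e}\in s\}$), adding a brand new resource $e^i_{\widetilde{e}}$. I then make $e^i_{\widetilde{e}}$ usable by player $i$ alone, by defining $\varphi_i(s)=s\cup\bigcup_{\widetilde{e}\in s}\{e^i_{\widetilde{e}}\}$, which attaches to each strategy exactly the private copies of the resources it already contains. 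Because only player $i$ ever places weight on $e^i_{\widetilde{e}}$, whenever this resource is used its congestion equals precisely $w_i$; choosing the slope $\ell'_{e^i_{\widetilde{e}}}(x)=\frac{\beta_{\widetilde{e}}}{w_i}x$ then yields the value $\frac{\beta_{\widetilde{e}}}{w_i}\cdot w_i=\beta_{\widetilde{e}}$, exactly recovering the missing constant.

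It remains to verify that $\varphi=(\varphi_1,\dots,\varphi_n)$ preserves every player's cost and that each $\varphi_i$ is one-to-one. The latter is immediate, since $\varphi_i$ augments $s$ with a set of resources uniquely determined by $s$, so $s$ is recoverable from $\varphi_i(s)$. For the cost, I would note that $\varphi$ leaves untouched the set of players using any original resource $e\in E\cap E'$, whence ${\cal L}_e(\varphi_1(s_1),\ldots,\varphi_n(s_n))={\cal L}_e(S)$ and the original resources contribute $\sum_{e\in s_i}\alpha_e{\cal L}_e(S)$, exactly as before; the private resources contribute $\sum_{\widetilde{e}\in s_i,\ \beta_{\widetilde{e}}>0}\beta_{\widetilde{e}}=\sum_{e\in s_i}\beta_e$. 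Summing gives $c_i(\varphi_1(s_1),\ldots,\varphi_n(s_n))=\sum_{e\in s_i}(\alpha_e{\cal L}_e(S)+\beta_e)=c_i(S)$ for every $i\in[n]$ and every $S\in\Sigma$, which is precisely the equivalence required.

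I expect the only genuinely delicate point to be the bookkeeping that guarantees each auxiliary resource is truly private, i.e. that no player other than $i$ can ever load $e^i_{\widetilde{e}}$, since this is exactly what pins its congestion to the value $w_i$ and makes the linear substitution reproduce the constant $\beta_{\widetilde{e}}$ on the nose; once this is in place, the verification of one-to-oneness and of cost preservation is entirely routine.
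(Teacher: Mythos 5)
Your construction is exactly the one in the paper's own proof: the same private resources $e^i_{\widetilde{e}}$ with slope $\beta_{\widetilde{e}}/w_i$, the same mapping $\varphi_i(s)=s\cup\bigcup_{\widetilde{e}\in s}\{e^i_{\widetilde{e}}\}$, and the same cost-preservation check. The argument is correct and complete as stated.
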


Because of this lemma, throughout this section, we are allowed to restrict our attention only to games with latency functions of the form $\ell_e(x) = \alpha_e x$, for any $e\in E$. In such a setting, we can rewrite the social value of a strategy profile as $\CSUM(S)=\sum_{e\in E}(\alpha_e {\cal L}_e(S)^2)$.

\subsection{Bounding the Approximate Price of Anarchy}
By definition, we have that $K$ is an $\epsilon$-PNE if, for any $i\in [n]$, it holds $$c_i(K)=\sum_{e\in s_i^K}(\alpha_e K_e)\leq (1+\epsilon)c_i(K_{-i}\diamond s_i^O)\leq(1+\epsilon)\sum_{e\in s_i^O}(\alpha_e(K_e+w_i)).$$ Thus, the primal formulation $LP(K,O)$ assumes the following form.
\begin{displaymath}
\begin{array}{rlr}
maximize & \displaystyle\sum_{e\in E}\left(\alpha_e K_e^2\right)\\
subject\ to\\
\displaystyle\sum_{e\in s_i^K}\left(\alpha_e K_e\right)-(1+\epsilon)\sum_{e\in s_i^O}\left(\alpha_e (K_e+w_i)\right) & \leq 0, & \forall i\in [n]\\
\displaystyle\sum_{e\in E}\left(\alpha_e O_e^2\right) & = 1,\\
\alpha_e & \geq 0, & \forall e\in E
\end{array}
\end{displaymath}
The dual program $DLP(K,O)$ is
\begin{displaymath}
\begin{array}{rlr}
minimize & \gamma\\
subject\ to\\
\displaystyle\sum_{i:e\in s_i^K}\left(y_i K_e\right)-(1+\epsilon)\sum_{i:e\in s_i^O}\left(y_i (K_e+w_i)\right)+\gamma O_e^2 & \geq K_e^2, & \forall e\in E\\
y_i & \geq 0, & \forall i\in [n]
\end{array}
\end{displaymath}
Let $\psi=\frac{1+\epsilon+\sqrt{\epsilon^2+6\epsilon+5}}{2}$ and $z=\lfloor\psi\rfloor$. For unweighted players we reobtain the upper bound proved in \cite{CKS10} with a much simpler and shorter proof.

\begin{theorem}\label{poa-un}
For any $\epsilon\geq 0$ and $\cal G$ with unweighted players, it holds $\epsilon$-PoA$({\cal G})\leq\frac{(1+\epsilon)(z^2+3z+1)}{2z-\epsilon}$.
\end{theorem}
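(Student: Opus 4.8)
The plan is to prove the bound by exhibiting a single feasible solution of the dual program $DLP(K,O)$ whose objective value is exactly $\frac{(1+\epsilon)(z^2+3z+1)}{2z-\epsilon}$; since this solution will turn out to be feasible for $DLP(K,O)$ simultaneously for every admissible pair $(K,O)$, weak duality upper-bounds the optimum of each $LP(K,O)$ by $\gamma$, and hence bounds $\epsilon$-PoA$({\cal G})$. By Lemma~\ref{lemma1} I may assume $\ell_e(x)=\alpha_e x$, so the relevant dual is the one displayed above. First I would set all multipliers equal, $y_i=y$ for every $i\in[n]$, and exploit the unweighted hypothesis $w_i=1$, which gives $K_e=|\{i:e\in s_i^K\}|$ and $O_e=|\{i:e\in s_i^O\}|$. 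Substituting this into the dual constraint and abbreviating $k:=K_e$, $o:=O_e$, the whole family of dual constraints collapses to the single two-variable condition $(y-1)k^2-(1+\epsilon)y(k+1)o+\gamma o^2\geq 0$, required to hold for all non-negative integers $k,o$.

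Next I would fix the parameters. Guided by the complementary slackness intuition, I want the constraint tight at the two integer points $(k,o)=(z,1)$ and $(k,o)=(z+1,1)$. Imposing equality at both and solving the resulting linear equation in $y$ yields $y=\frac{2z+1}{2z-\epsilon}$; plugging this back into the $o=1$ constraint then produces exactly $\gamma=\frac{(1+\epsilon)(z^2+3z+1)}{2z-\epsilon}$, the claimed value. Observe that $y>1$ and $2z-\epsilon>0$, two preconditions I would verify from $z=\lfloor\psi\rfloor$ together with the defining identity $\psi^2=(1+\epsilon)(\psi+1)$ (in particular $z\geq 1$ always, since $\psi\geq(1+\sqrt 5)/2>1$).

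It then remains to check dual feasibility, i.e. that $g(k,o):=(y-1)k^2-(1+\epsilon)y(k+1)o+\gamma o^2\geq 0$ on the whole non-negative integer lattice, and I would argue by cases on $o$. For $o=0$ the expression is $(y-1)k^2\geq 0$ since $y>1$. For $o=1$, the map $g(\cdot,1)$ is a convex parabola in $k$ whose vertex sits at $k=z+\frac12$, so its minimum over integer $k$ is attained at $k=z$ or $k=z+1$; by the very choice of $y$ both of these values equal $0$, whence $g(k,1)\geq 0$. For $o\geq 2$ I would instead minimise the convex parabola $g(\cdot,o)$ over \emph{real} $k$ and show the minimum is already non-negative: a short computation gives that this minimum is non-negative precisely when $o\geq\frac{8z+4}{8z+3}$, and since $\frac{8z+4}{8z+3}<2$ this holds for every integer $o\geq 2$. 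Weak duality then yields $\epsilon$-PoA$({\cal G})\leq\gamma$.

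The reduction and the cases $o\in\{0,1\}$ are routine; the main obstacle is the case $o\geq 2$, where the continuous minimum over $k$ must be controlled and shown never to bind. This is exactly the place where integrality is essential, because over real $o$ the quadratic form does dip below zero near the vertex $o=\frac{(1+\epsilon)y(k+1)}{2\gamma}$, so the argument genuinely relies on $o$ being integral. A secondary point requiring care is confirming the preconditions $z\geq 1$ and $2z>\epsilon$ for $z=\lfloor\psi\rfloor$, and—if one wants to match the tight bound of \cite{CKS10}—checking that $\lfloor\psi\rfloor$ is the integer minimising $\frac{(1+\epsilon)(z^2+3z+1)}{2z-\epsilon}$. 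Finally, the two active constraints at $(z,1)$ and $(z+1,1)$ are precisely what complementary slackness would use to read off a matching primal (lower-bounding) instance.
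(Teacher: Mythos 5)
Your proposal is correct and follows essentially the same route as the paper: the identical dual assignment $y_i=\frac{2z+1}{2z-\epsilon}$, $\gamma=\frac{(1+\epsilon)(z^2+3z+1)}{2z-\epsilon}$, reducing feasibility to the single inequality $K_e^2-(2z+1)(K_eO_e+O_e)+(z^2+3z+1)O_e^2\geq 0$ over non-negative integers. Your case analysis on $O_e$ (vertex at $z+\tfrac12$ for $O_e=1$, real minimum already non-negative for $O_e\geq 2$) correctly supplies the ``easy calculations'' the paper leaves implicit, and your verification that $2z-\epsilon>0$ matches the paper's feasibility remark.
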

\begin{proof}
In such a case, since $w_i=1$ for each $i\in [n]$, by choosing $y_i=\frac{2z+1}{2z-\epsilon}$ for any $i\in [n]$ and $\gamma=\frac{(1+\epsilon)(z^2+3z+1)}{2z-\epsilon}$, the dual inequalities become of the form $$\frac{2z+1}{2z-\epsilon}\left(K_e^2-(1+\epsilon)(K_e+1)O_e\right)+\frac{(1+\epsilon)(z^2+3z+1)}{2z-\epsilon} O_e^2\geq K_e^2$$ which is equivalent to
\begin{equation}
K_e^2-(2z+1)(K_e O_e+O_e)+(z^2+3z+1)O_e^2\geq 0.\label{eq1}
\end{equation}
Easy calculations show that this is always verified for any pair of non-negative integers $(K_e,O_e)$. Note that the definition of $z$ guarantees that $2z-\epsilon\geq 0$, so that the proposed dual solution is a feasible one.\qed
\end{proof}

When $\epsilon=0$, we reobtain the well-known price of anarchy of $5/2$ which holds for PNE. We illustrate how the dual formulation can be also used to discover a matching lower bounding instance. From the analysis of the dual constraints (\ref{eq1}), it is easy to see that they get tight only for pairs $(K_e,O_e)$ of the form $(1,1)$ and $(2,1)$. Thus, if the $5/2$ upper bound is tight, the complementary slackness conditions assure us that in the matching lower bounding instance only resources implementing the pairs $(1,1)$ and $(2,1)$ are needed. This can be easily achieved through a game using 3 players and 3 resources and defined as follows: $\Sigma_1=\{\{e_1,e_2\},\{e_3\}\}$, $\Sigma_2=\{\{e_1\},\{e_2,e_3\}\}$, $\Sigma_3=\{\{e_2\},\{e_3\}\}$ and $\alpha_1=5$, $\alpha_2=2$, $\alpha_3=3$. For such an instance, we have $K=(\{e_1,e_2\},\{e_2,e_3\},\{e_3\})$, $O=(\{e_3\},\{e_1\},\{e_2\})$ with a price of anarchy of $5/2$. Clearly, this instance can be extended to any number of players $n>3$ by adding a fourth resource $e_4$ with $\alpha_4=0$ and setting $\Sigma_i=\{e_4\}$ for any $i\in [n]$ with $i\geq 4$. Note that these are minimal lower bounding instances (the previous known lower bounding instances presented in \cite{CK05} used $2n$ resources for any $n\geq 3$).
\iffalse In order to show that this bound is tight, let us consider the following three dual constraints:
\begin{enumerate}
\item $y_1-2y_2+y_{n+1}\geq 1$,
\item $2(y_1+y_2)-3y_3+y_{n+1}\geq 4$,
\item $2(y_2+y_3)-3y_1+y_{n+1}\geq 4$.
\end{enumerate}
By constraint $1$ we get $y_1\geq 1+2y_2-y_{n+1}$ which, used in constraint $3$, yields $y_3\geq\frac 7 2 +2(y_2-y_{n+1})$. By substituting this last inequality in constraint $2$, we get $2(y_1-2y_2)+7y_{n+1}\geq\frac{29}{2}$. This last inequality, together with constraint $1$, implies $y_{n+1}\geq\frac 5 2$. In order to implement the above three dual constraints, we only need an instance with $3$ players and $3$ resources where $\Sigma_1=\{\{e_1,e_2\},\{e_3\}\}$, $\Sigma_2=\{\{e_1\},\{e_2,e_3\}\}$, $\Sigma_3=\{\{e_2\},\{e_3\}\}$. It is not difficult to see that, by setting $\alpha_1=5$, $\alpha_2=2$, and $\alpha_3=3$ for such an instance, we have $K=(\{e_1,e_2\},\{e_2,e_3\},\{e_3\})$, $O=(\{e_3\},\{e_1\},\{e_2\})$ and the price of anarchy is equal to $\frac 5 2$. Clearly such an instance can be extended to any number of players $n>3$ by adding a fourth resource $e_4$ with $\alpha_4=0$ and setting $\Sigma_i=\{e_4\}$ for any $i\in [n]$ with $i\geq 4$. Note that the analysis of the dual formulation allowed us to identify the minimal set of constraints needed to achieve the $\frac 5 2$ bound which can be turned into a minimal lower bounding instance (the previous known lower bounding instance presented in \cite{CK05} used $2n$ resources for any $n\geq 3$).\fi

More generally, for $\epsilon$-PNE, the dual constraints (\ref{eq1}) get tight only for pairs of the form $(z,1)$ and $(z+1,1)$. Thus, in order to obtain a matching lower bounding instance, we only need to implement this family of dual constraints, that is, we need an instance with at least $z+2$ players and a set of resources such that $O_e=1$ and $K_e\in\{z,z+1\}$ for any $e\in E$. In fact, the matching lower bounding instances given in \cite{CKS10} use $z+2$ players and $2z+4$ resources, half of which has $K_e=z$ and the other half has $K_e=z+1$. It is easy to see that these instances are not minimal. In fact, they produce a dual program with $z+3$ variables and $2z+4$ constraints, where only $z+3$ constraints are sufficient to exactly characterize the optimal dual solution. Unfortunately, this set of constraints changes as a function of $\epsilon$ and so it is not easy to achieve a general scheme of minimal lower bounding instances.

For the weighted case, we can prove the following upper bound.

\begin{theorem}\label{poa-w}
For any $\epsilon\geq 0$ and $\cal G$ with weighted players, it holds $\epsilon$-PoA$({\cal G})\leq\psi^2$.
\end{theorem}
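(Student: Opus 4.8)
The plan is to mirror the unweighted argument of Theorem~\ref{poa-un} by exhibiting an explicit feasible solution to the dual program $DLP(K,O)$ whose objective value is $\gamma = \psi^2$; since any feasible dual solution upper-bounds the optimum of $LP(K,O)$, this immediately gives $\epsilon$-PoA$({\cal G}) \leq \psi^2$. The decisive difference from the unweighted case lies in the choice of dual multipliers: rather than a common constant, I would set $y_i = \lambda w_i$ for every $i \in [n]$, for a constant $\lambda > 1$ to be fixed later, so that the multipliers scale with the players' weights.

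First I would simplify the dual constraint for a fixed resource $e$ under this choice. Using the identities $\sum_{i : e \in s_i^K} w_i = K_e$ and $\sum_{i : e \in s_i^O} w_i = O_e$, the first two sums collapse to $\lambda K_e^2$ and $(1+\epsilon)\lambda\bigl(K_e O_e + \sum_{i : e \in s_i^O} w_i^2\bigr)$ respectively. The main obstacle is precisely the leftover term $\sum_{i : e \in s_i^O} w_i^2$, which, unlike everything else, does not reduce to a clean function of $O_e$; this is the genuinely weighted phenomenon absent from the unweighted proof. The key observation is that, since all weights are non-negative, $\sum_{i : e \in s_i^O} w_i^2 \le \bigl(\sum_{i : e \in s_i^O} w_i\bigr)^2 = O_e^2$, the worst case being a single player occupying $e$ in $O$. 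Because this term enters the constraint with a negative coefficient, replacing it by its upper bound $O_e^2$ only weakens the left-hand side, so it suffices to verify the constraint in this extremal configuration.

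After this reduction, every dual constraint follows from the single scalar inequality
\[
(\lambda-1)K_e^2 - (1+\epsilon)\lambda K_e O_e + \bigl(\psi^2 - (1+\epsilon)\lambda\bigr)O_e^2 \ge 0,
\]
required to hold for all real $K_e, O_e \ge 0$. Finally I would treat this as a quadratic form in $(K_e, O_e)$ and impose non-negativity on the non-negative orthant, i.e.\ a non-positive discriminant together with non-negative diagonal coefficients; this yields the requirement $\psi^2 \ge (1+\epsilon)\lambda + \frac{(1+\epsilon)^2\lambda^2}{4(\lambda-1)}$. Minimizing the right-hand side over $\lambda > 1$ pins down the optimal multiplier $\lambda = 1 + \sqrt{(1+\epsilon)/(5+\epsilon)}$, and a short computation---using the defining relation $\psi^2 = (1+\epsilon)(\psi+1)$, which one obtains directly from the definition of $\psi$---shows that the minimum value equals exactly $\psi^2$. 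Hence the chosen $\gamma = \psi^2$ is feasible and the bound follows. The remaining checks ($\lambda > 1$ and non-negativity of the two diagonal coefficients) are routine, the latter being automatic once the discriminant vanishes.
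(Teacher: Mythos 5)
Your proof is correct and follows the same high-level strategy as the paper: exhibit a feasible solution of $DLP(K,O)$ with objective $\gamma=\psi^2$, built around the constant $1+\sqrt{(1+\epsilon)/(5+\epsilon)}$. However, your execution is genuinely more careful than the paper's own. The paper sets $y_i$ equal to that constant for every player and asserts that the dual constraint becomes $y\left(K_e^2-(1+\epsilon)(K_eO_e+O_e)\right)+\psi^2O_e^2\ge K_e^2$; but with player-independent multipliers the first sum is $y\,|\{i:e\in s_i^K\}|\,K_e$ rather than $yK_e^2$, and the term $O_e$ standing in for $\sum_{i:e\in s_i^O}w_i^2$ is only justified for unit weights --- indeed, the paper's displayed inequality, read literally over all non-negative reals as claimed, fails for $K_e=0$ and $O_e$ small and positive. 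Your choice $y_i=\lambda w_i$ is exactly what makes the first sum collapse to $\lambda K_e^2$, and your bound $\sum_{i:e\in s_i^O}w_i^2\le O_e^2$ (legitimate because that term enters with a negative sign, so enlarging it only weakens the constraint) is the step the paper elides; it correctly produces $O_e^2$ where the paper writes $O_e$, and with the discriminant set to zero the resulting form is a perfect square, hence non-negative on the whole orthant. The optimization over $\lambda$, the value $\lambda^*=1+\sqrt{(1+\epsilon)/(5+\epsilon)}$, and the identity $\psi^2=(1+\epsilon)(\psi+1)$ all check out. In short, you have reproduced the intended argument in the form in which it actually closes.
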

\begin{proof}
In such a case, by choosing $y_i=1+\frac{\sqrt{1+\epsilon}}{\sqrt{5+\epsilon}}$ for any $i\in [n]$ and $\gamma=\psi^2$, the dual inequalities become of the form $$\left(1+\frac{\sqrt{1+\epsilon}}{\sqrt{5+\epsilon}}\right)\left(K_e^2-(1+\epsilon)\left(K_e O_e+O_e\right)\right)+\psi^2 O_e^2\geq K_e^2$$ which is equivalent to
\begin{equation}
\frac{\sqrt{1+\epsilon}}{\sqrt{5+\epsilon}}K_e^2-\left(1+\frac{\sqrt{1+\epsilon}}{\sqrt{5+\epsilon}}\right)(1+\epsilon)(K_e O_e+O_e)+\psi^2 O_e^2\geq 0.\label{eq2}
\end{equation}
Easy calculations show that this is always verified for any pair of non-negative reals $(K_e,O_e)$.\qed
\end{proof}

Note that, when $\psi=z$, it holds $\psi^2=\frac{(1+\epsilon)(z^2+3z+1)}{2z-\epsilon}$. Hence, the $\epsilon$-PoA in the weighted and unweighted cases coincide for all $\epsilon\geq 0$ such that $\frac{1+\epsilon+\sqrt{\epsilon^2+6\epsilon+5}}{2}$ is a natural number. The dual constraints (\ref{eq2}) get tight only for pairs $(K_e,O_e)$ such that $K_e=\psi O_e$.

For PNE, by setting $\epsilon=0$ we reobtain the well-known bound $(3+\sqrt{5})/2$. We can show that this is tight by considering an instance having 3 players and 3 resources with $w_1=1$, $w_2=w_3=(1+\sqrt{5})/2$, $\Sigma_1=\{\{e_1\},\{e_2,e_3\}\}$, $\Sigma_2=\{\{e_2\},\{e_1,e_3\}\}$, $\Sigma_3=\{\{e_3\},\{e_2\}\}$, $\alpha_1=2$, $\alpha_2=\sqrt{5}-1$ and $\alpha_3=3-\sqrt{5}$. We have $K=(\{e_2,e_3\},\{e_1,e_3\},\{e_2\})$, $O=(\{e_1\},\{e_2\},\{e_3\})$ and the price of anarchy is equal to $(3+\sqrt{5})/2$. Again, we have identified a minimal lower bounding instance which is slightly simpler than the previous one given in \cite{AAE05} which used $4$ players and $3$ resources.

For general $\epsilon$-PNE, we are able to provide tight lower bounds only for a subset, although having infinite cardinality, of values of $\epsilon$. Let $t$ and $y$ be two positive integers such that $1\leq y\leq t+1$. We set $\epsilon(t,y)=\frac{(t-1)\sqrt{t^2+4y}+2y+t^2-t-2}{\sqrt{t^2+4y}+t+2}$, which is always non-negative since $y\geq 1$ and yields $\psi=\frac{t+\sqrt{t^2+4y}}{2}>t$. We create an instance with $t+2$ players and $2(t+1)$ resources, where $w_i=1$ for any $i\in [t+1]$ and $w_{t+2}=\psi-t$. The first $t+1$ resources $(e_j)_{j\in [t+1]}$ have latency $\ell(x)=x$, while the last $t+1$ resources $(e'_j)_{j\in [t+1]}$ have latency $\ell(x)=x/y$. The set of strategies for each player $i\in [t+1]$ is $\Sigma_i=\{\{e_i\},\bigcup_{j\in [t]}\{e_{i+j}\}\cup\bigcup_{j\in [y]}\{e'_{i+j}\}\}$, with the sum of the indexes taken circularly, while $\Sigma_{t+2}=\{\bigcup_{j\in [t+1]}\{e'_{j}\},\bigcup_{j\in [t+1]}\{e_j\}\}$. The first strategy of each player is the optimal one, while the second strategy is the one played at the $\epsilon(t,y)$-PNE. Note that for any $e$ we have $K_e=\psi O_e$, thus implying an $\epsilon$-PoA equal to $\psi^2$. It is not difficult to show that $K$ is an $\epsilon(t,y)$-PNE by exploiting the equality $2\psi=t+\sqrt{t^2+4y}$.

Deriving a tight lower bound for any possible value of $\epsilon$ remains an interesting open problem.

\subsection{Bounding the Approximate Price of Stability}\label{pos}
Recall that, since the $\epsilon$-PoS is a best-case measure, the primal-dual approach guarantees a tight analysis only if we are able to exactly characterize the polyhedron defining the set of the best $\epsilon$-PNE. It is not known how to do this at the moment, thus all the approaches used so far in the literature approximate the best $\epsilon$-PNE with an $\epsilon$-PNE minimizing a certain potential function. In \cite{CKS10}, it is shown that, for unweighted players, any strategy profile $S$ which is a local minimum of the function $\Phi_\epsilon(S)=\sum_{e\in E}\left(\alpha_e\left({\cal L}_e(S)^2+\frac{1-\epsilon}{1+\epsilon}{\cal L}_e(S)\right)+\frac{2\beta_e}{1+\epsilon}{\cal L}_e(S)\right)$, called $\epsilon$-approximate potential, is an $\epsilon$-PNE. Thus, it is possible to get an upper bound on the $\epsilon$-PoS by measuring the $\epsilon$-PoA of the global minimum of $\Phi_\epsilon$.

We now illustrate our approach which yields the same $\frac{1+\sqrt{3}}{\epsilon+\sqrt{3}}$ upper bound achieved in \cite{CKS10}. First of all, we can use the inequality $\Phi_\epsilon(K)\leq\Phi_\epsilon(O)$ which results in the constraint
$$\displaystyle\sum_{e\in E}\left(\alpha_e \left(K^2_e+\frac{1-\epsilon}{1+\epsilon}K_e-O^2_e-\frac{1-\epsilon}{1+\epsilon}O_e\right)\right)\leq 0.$$
Then, since we assume that $K$ is the global minimum of $\Phi_\epsilon$, we also have $\sum_{i\in [n]}\left(\Phi_\epsilon(K)-\Phi_\epsilon(K_{-i}\diamond s_i^O)\right)\leq 0$ which results in the constraint
$$\displaystyle\sum_{e\in E}\left(\alpha_e \left(K^2_e-\frac{\epsilon}{1+\epsilon}K_e-K_e O_e-\frac{1}{1+\epsilon}O_e\right)\right)\leq 0.$$
Thanks to this, the dual formulation becomes the following one.
\begin{displaymath}
\begin{array}{rlr}
minimize & \gamma\\
subject\ to\\
K_e^2(x+y)+\frac{K_e}{1+\epsilon}(y(1-\epsilon)-z\epsilon)\\
-\left(y O_e^2+z K_e O_e+\frac{O_e}{1+\epsilon}(y(1-\epsilon)+z)\right)+\gamma O_e^2 & \geq K_e^2, & \forall e\in E\\
y,z & \geq 0
\end{array}
\end{displaymath}
Thus, for unweighted players, we obtain the following result for any $\epsilon\in [0;1]$ (this is the only interesting case, since \cite{CKS10} shows that, for any $\epsilon\geq 1$, $\epsilon$-PoS$({\cal G})=1$ for any $\cal G$).

\begin{theorem}\label{pos-un}
For any $\epsilon\in [0;1]$ and $\cal G$ with unweighted players, it holds $\epsilon$-PoS$({\cal G})\leq\frac{1+\sqrt{3}}{\epsilon+\sqrt{3}}$.
\end{theorem}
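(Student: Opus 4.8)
The plan is to exhibit an explicit feasible solution of the dual program $DLP(K,O)$ written above whose objective value equals $\gamma=\frac{1+\sqrt{3}}{\epsilon+\sqrt{3}}$. By weak LP duality this bounds the optimum of $LP(K,O)$ from above for every fixed pair $(K,O)$, and since $K$ is taken to be the global minimizer of the potential $\Phi_\epsilon$ (hence an $\epsilon$-PNE), the resulting value is a uniform upper bound on $\epsilon$-PoS$(\CG)$ over all unweighted $\CG$. The structural fact I would lean on is that, all weights being $1$, every congestion $K_e$ and $O_e$ is a \emph{non-negative integer}; it therefore suffices to satisfy the single per-resource dual inequality on integer pairs only, rather than on all non-negative reals as in the weighted case.

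First I would set $\gamma=\frac{1+\sqrt{3}}{\epsilon+\sqrt{3}}$ and choose the two dual multipliers $y$ and $z$ as explicit non-negative functions of $\epsilon$ valid on $[0,1]$. The choice is dictated by complementary slackness: a matching lower bound uses resources implementing only a few small integer pairs $(K_e,O_e)$, so I would calibrate $y$ and $z$ to make the dual constraint tight exactly at those pairs. Plugging the chosen values in, the per-resource constraint collapses to a single quadratic inequality $g(K_e,O_e)\ge 0$ whose coefficients --- including the linear terms inherited from the $\frac{1-\epsilon}{1+\epsilon}$ correction in $\Phi_\epsilon$ --- are elementary functions of $\epsilon$.

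The final and decisive step is to verify $g(K_e,O_e)\ge 0$ on the integer lattice, and this is where I expect the main difficulty. The quadratic form need not be positive semidefinite over the reals, so continuous reasoning alone is insufficient; instead I would fix $O_e$, treat $g$ as an upward parabola in $K_e$, locate its real minimizer, and test only the two nearest integers, handling the boundary directions $O_e=0$ and $K_e=0$ separately. This reduces the whole proof to a finite family of one-variable inequalities in $\epsilon\in[0,1]$, each provable by elementary calculus, together with a final check that the chosen $y$ and $z$ remain non-negative throughout $[0,1]$. The delicate point is precisely the rounding: because of the linear correction terms the worst integer pair shifts with $\epsilon$ and need not be the obvious $(1,1)$ or $(1,0)$, so one must ensure that no lattice point violates the bound --- an integrality phenomenon that is exactly what separates this unweighted estimate from its weighted analogue.
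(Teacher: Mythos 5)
Your proposal follows essentially the same route as the paper: exhibit a feasible solution of the stated dual with objective $\gamma=\frac{1+\sqrt{3}}{\epsilon+\sqrt{3}}$, calibrate the two multipliers by forcing tightness at the small integer pairs (which indeed recovers the paper's choice $y=\frac{2\epsilon+\sqrt{3}(1+\epsilon)}{2(\epsilon+\sqrt{3})}$, $z=\frac{1-\epsilon}{\epsilon+\sqrt{3}}$, tight at $(0,1)$ and $(1,1)$), and verify the resulting quadratic over non-negative integer pairs. Your extra care about the lattice verification and the asymptotic tightness direction $K_e=(2+\sqrt{3})O_e$ is exactly the ``easy calculations'' step the paper leaves implicit, so the argument goes through as planned.
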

\begin{proof}
By choosing $y=\frac{2\epsilon+\sqrt{3}(1+\epsilon)}{2(\epsilon+\sqrt{3})}$, $z=\frac{1-\epsilon}{\epsilon+\sqrt{3}}$ and $\gamma=\frac{1+\sqrt{3}}{\epsilon+\sqrt{3}}$, the dual inequalities become
$$(\epsilon-1)((\sqrt{3}-2)K_e^2+(2O_e-\sqrt{3})K_e+(2+\sqrt{3})(O_e-O_e^2))\geq 0.$$
Easy calculations show that this is always verified for any pair of non-negative integers $(K_e,O_e)$.\qed
\end{proof}

Here, the only tight dual constraints are those of the form $(0,1)$ and $(1,1)$ which are clearly insufficient to achieve an $\epsilon$-PoS greater than $1$, since $K_e\leq O_e$ for any $e\in E$. What is going on here? The answer is that the lower bound on the $\epsilon$-PoS can be achieved only asymptotically, that is, when $n$ tends to infinity. Thus, we must also check what happens when both $K_e$ and $O_e$ goes to infinity and their ratio remains constant. We obtain that the dual constraints are asymptotically tight for pairs of the form $(K_e,O_e)$ such that $K_e=(2+\sqrt{3})O_e$ and $O_e$ goes to infinity. The lower bounding instances proposed in \cite{CKS10} have $n_1$ resources of type $(0,1)$, $n_1(n_1-1)$ resources of type $(1,1)$ and one resource of type $\left(n_1,\frac{\sqrt{2\epsilon^4+3\epsilon^3+\epsilon+3}+2\epsilon^2+2\epsilon-1}{\sqrt{2\epsilon^4+3\epsilon^3+\epsilon+3}+\epsilon^2+\epsilon+1}n_1\right)$ and $n_1$ going to infinity. Thus, such lower bounding instances possess all the combinatorics needed to implement the worst-case dual constraints, but still there is a remarkable gap between upper and lower bounds. Hence, the intuition should suggest us that that the upper bound is not tight and additional constraints should be used in the primal formulation so as to better characterize the polyhedron defining the best $\epsilon$-PNE. Note that the inequalities stating the $K$ is an $\epsilon$-PNE is of no use here since they are dominated by the inequality $\sum_{i\in [n]}\left(\Phi_\epsilon(K)-\Phi_\epsilon(K_{-i}\diamond s_i^O)\right)\leq 0$.

\

In order to deal with the weighted case, it is possible to rephrase the approach of \cite{CKS10} to turn the potential given in \cite{FKS05} for weighted linear congestion games into an $\epsilon$-potential function for this class of games so as to use the same approach as in the unweighted case. Anyway, no particularly significative upper bounds can be achieved in this case as shown in the following theorem (details can be found in subsection \ref{PoSW} of the Appendix).

\begin{theorem}\label{posw}
For any $\epsilon\in [0;1]$ and $\cal G$ with weighted players, it holds $\epsilon$-PoS$({\cal G})\leq\frac{2}{1+\epsilon}$.
\end{theorem}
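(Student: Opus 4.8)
The plan is to reproduce the argument of Theorem~\ref{pos-un} in the weighted setting, replacing the $\epsilon$-approximate potential of \cite{CKS10} by a weighted analogue obtained from the exact potential of \cite{FKS05}. Setting $Q_e(S)=\sum_{i\in[n]:e\in s_i}w_i^2$ and using Lemma~\ref{lemma1} to assume $\ell_e(x)=\alpha_e x$, I would work with the function
$$\Phi_\epsilon(S)=\sum_{e\in E}\alpha_e\left({\cal L}_e(S)^2+\frac{1-\epsilon}{1+\epsilon}Q_e(S)\right),$$
which specializes to the potential of \cite{CKS10} when all weights equal one, since then $Q_e(S)={\cal L}_e(S)$. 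The first step is to certify that a global minimizer $K$ of $\Phi_\epsilon$ is an $\epsilon$-PNE. For this I would expand $\Phi_\epsilon(K_{-i}\diamond t)-\Phi_\epsilon(K)$ resource by resource and show that, whenever player $i$ owns an $\epsilon$-improving deviation to $t$ (one with $(1+\epsilon)c_i(K_{-i}\diamond t)<c_i(K)$), the potential strictly decreases, contradicting minimality; hence no such deviation exists at $K$.

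Granting this, the rest is a two-line sandwich. Because $0\leq Q_e(S)\leq{\cal L}_e(S)^2$ and $\frac{1-\epsilon}{1+\epsilon}\geq 0$ for $\epsilon\in[0;1]$, one has $\CSUM(S)\leq\Phi_\epsilon(S)\leq\frac{2}{1+\epsilon}\CSUM(S)$ for every profile $S$. Coupling this with $\Phi_\epsilon(K)\leq\Phi_\epsilon(O)$, which holds since $K$ minimizes $\Phi_\epsilon$, gives
$$\CSUM(K)\leq\Phi_\epsilon(K)\leq\Phi_\epsilon(O)\leq\frac{2}{1+\epsilon}\CSUM(O),$$
and, as the best $\epsilon$-PNE costs no more than the $\epsilon$-PNE $K$, the claimed bound $\epsilon$-PoS$({\cal G})\leq\frac{2}{1+\epsilon}$ follows. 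In the primal-dual language of Section~\ref{sectech1}, this amounts to the single primal inequality $\Phi_\epsilon(K)\leq\Phi_\epsilon(O)$ together with the normalization $\sum_{e\in E}\alpha_e O_e^2=1$, whose dual is feasibly solved by $\gamma=\frac{2}{1+\epsilon}$; complementary slackness points to resources carrying a single player in $O$ (so $Q_e(O)=O_e^2$) and many light players in $K$ (so $Q_e(K)\approx 0$) as the worst case.

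I expect the $\epsilon$-PNE certification to be the only real obstacle, and it is here that the weights intervene: the potential-decrease estimate cannot be read off from signs alone, but follows once one uses that ${\cal L}_e(K)\geq w_i$ on every resource $e$ used by player~$i$, which is exactly the ingredient that lets the correction term $\frac{\epsilon}{1+\epsilon}w_i\sum_{e}\alpha_e$ arising on the abandoned resources be dominated by the latency sum $\sum_{e}\alpha_e{\cal L}_e(K)$ there (in the unweighted case this reduces to ${\cal L}_e\geq 1$). Finally, I would stress that the resulting bound is deliberately weak: the estimates $0\leq Q_e\leq{\cal L}_e^2$ are tight only at the two degenerate extremes, so adding the per-deviation constraint $\sum_{i\in[n]}\left(\Phi_\epsilon(K)-\Phi_\epsilon(K_{-i}\diamond s_i^O)\right)\leq 0$ exploited in Theorem~\ref{pos-un} buys nothing here, and a large gap between this upper bound and the known lower bounds remains.
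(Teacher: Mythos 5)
Your proposal is correct and follows essentially the same route as the paper: the same weighted $\epsilon$-potential built from the Fotakis--Kontogiannis--Spirakis potential (yours is the paper's scaled by $2$), the same lemma that local minima of $\Phi_\epsilon$ are $\epsilon$-PNE using precisely the ingredient ${\cal L}_e(S)\geq w_i$ for $e\in s_i$, and the same single constraint $\Phi_\epsilon(K)\leq\Phi_\epsilon(O)$ with dual value $\gamma=\frac{2}{1+\epsilon}$. Your explicit sandwich $\CSUM(S)\leq\Phi_\epsilon(S)\leq\frac{2}{1+\epsilon}\CSUM(S)$ via $0\leq Q_e(S)\leq{\cal L}_e(S)^2$ is just a cleaner packaging of the paper's terse dual computation.
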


In this case, no specific lower bounds are known.

\subsection{Bounding the Approximation Ratio of One-Round Walks}
For a one-round walk $W$, we set $K=S_n^W$. Define $K_e(i)$ as the sum of the weights of the players using resource $e$ in $K$ before player $i$ performs her choice. $LP(K,O)$ in this case has the following form.

\begin{displaymath}
\begin{array}{rlr}
maximize & \displaystyle\sum_{e\in E}\left(\alpha_e K_e^2\right)\\
subject\ to\\
\displaystyle\sum_{e\in s_i^K}\left(\alpha_e (K_e(i)+w_i)\right)-\sum_{e\in s_i^O}\left(\alpha_e (K_e(i)+w_i)\right) & \leq 0, & \forall i\in [n]\\
\displaystyle\sum_{e\in E}\left(\alpha_e O_e^2\right) & = 1\\
\alpha_e & \geq 0, & \forall e\in E
\end{array}
\end{displaymath}
$DLP(K,O)$ is as follows.
\begin{displaymath}
\begin{array}{rlr}
minimize & \gamma\\
subject\ to\\
\displaystyle\sum_{i:e\in s_i^K}\left(y_i (K_e(i)+w_i)\right)-\sum_{i:e\in s_i^O}\left(y_i (K_e(i)+w_i)\right)+\gamma O_e^2 & \geq K_e^2, & \forall e\in E\\
y_i & \geq 0, & \forall i\in [n]
\end{array}
\end{displaymath}

For both unweighted and weighted players we easily reobtain the upper bounds on Apx$^1_\emptyset$ given in \cite{CMS06}.

\begin{theorem}\label{apx-un}
For any $\cal G$ with unweighted players, it holds Apx$^1_\emptyset({\cal G})\leq 2+\sqrt{5}$.
\end{theorem}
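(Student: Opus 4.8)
The plan is to exhibit an explicit feasible solution to the dual program $DLP(K,O)$ for a one-round walk, specialized to the unweighted case $w_i=1$, and to read off $\gamma$ as the upper bound. Concretely, I would set $y_i = 1+\sqrt5$ for every $i\in[n]$ and $\gamma = 2+\sqrt5$ (the positive root of $\gamma^2-4\gamma-1=0$), and then verify that every dual resource-constraint holds. The verification rests on two observations about the quantities $K_e(i)$ generated along the walk. First, the players that select $e$ in $K$ are processed in some order, and at the instant each of them moves the term $K_e(i)+w_i = K_e(i)+1$ takes successively the values $1,2,\ldots,K_e$; hence $\sum_{i:e\in s_i^K}(K_e(i)+1)=\frac{K_e(K_e+1)}{2}$. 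Second, for every player with $e\in s_i^O$ one has $K_e(i)\le K_e$, so $\sum_{i:e\in s_i^O}(K_e(i)+1)\le O_e(K_e+1)$; since this sum enters the dual constraint with the non-positive coefficient $-y_i$, replacing it by this upper bound only decreases the left-hand side, and it therefore suffices to establish the weakened constraints.

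Carrying out these substitutions, each dual constraint is implied by $y\,\frac{K_e(K_e+1)}{2} - y\,O_e(K_e+1) + \gamma O_e^2 \ge K_e^2$. Writing $a=K_e$, $b=O_e$ and inserting $y=1+\sqrt5$, $\gamma=2+\sqrt5$, this collapses to the single quadratic inequality $q(a,b):=\left(\frac{y}{2}-1\right)a^2 - y\,ab + \gamma b^2 + \frac{y}{2}a - y\,b \ge 0$, which I would then show holds for all non-negative integers $a,b$. The choice of the two constants is dictated by the asymptotic (large $a,b$) regime: the homogeneous part of $q$ becomes the perfect square $\frac{\sqrt5-1}{2}\left(a-\frac{3+\sqrt5}{2}b\right)^2$ precisely when $y=1+\sqrt5$ and $\gamma=2+\sqrt5$, which is what simultaneously pins down both values and explains why the worst-case ratio $K_e/O_e$ tends to $\frac{3+\sqrt5}{2}$.

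The step I expect to be delicate is the final verification, because it genuinely requires integrality rather than being a statement about non-negative reals. Minimizing $q$ in $a$ for fixed $b$ yields $\min_a q(a,b)=b-\frac{2+\sqrt5}{4}$, which is negative for $b\in\left(0,\frac{2+\sqrt5}{4}\right)$ and in particular dips just below zero near $b=1$; so over the reals $q$ is not non-negative, and a careless relaxation would fail. The resolution is that $b=O_e$ is a non-negative integer: for $b\ge2$ the bound $b-\frac{2+\sqrt5}{4}>0$ gives $q>0$ for every real $a$; for $b=0$ one has $q(a,0)=\frac{\sqrt5-1}{2}a^2+\frac{1+\sqrt5}{2}a\ge0$; and for $b=1$ a direct inspection of the upward parabola in $a$ (minimized near $a\approx1.31$) suffices, the constraint becoming tight exactly at $(a,b)=(1,1)$ with $q(1,1)=0$ and strictly positive at $a=0$ and at every integer $a\ge2$. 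This establishes feasibility of the proposed dual solution and hence $\mathrm{Apx}^1_\emptyset(\mathcal G)\le\gamma=2+\sqrt5$; moreover, by complementary slackness the tight behaviour at $(1,1)$ together with the limiting direction $a/b\to\frac{3+\sqrt5}{2}$ is exactly what would guide the construction of a matching lower-bounding instance.
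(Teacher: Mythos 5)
Your proposal is correct and follows essentially the same route as the paper: the same dual solution $y_i=1+\sqrt5$, $\gamma=2+\sqrt5$, the same reduction to the worst case in which every $O$-player of a resource arrives after all of its $K$-players, and the same resulting quadratic constraint. The only difference is that you carry out explicitly the verification the paper dismisses as ``easy calculations,'' and your observation that the inequality genuinely needs $O_e$ to be an integer (it fails for real $O_e$ near $1$) is accurate and consistent with the paper's restriction to non-negative integer pairs in the unweighted case.
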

\begin{proof}
Observe that the worst-case dual constraints occur when each player $i$ using resource $e$ in $O$ enters the game after all players using $e$ in $K$ have entered the game yet. In such a case, by choosing $y_i=1+\sqrt{5}$ for any $i\in [n]$ and $\gamma=2+\sqrt{5}$, the worst-case dual inequalities become $$\left(1+\sqrt{5}\right)\left(\frac{K_e(K_e+1)}{2}-(K_e+1)O_e\right)+\left(2+\sqrt{5}\right)O_e^2\geq K_e^2$$ which is equivalent to $$\left(\frac{\sqrt{5}-1}{2}\right)K_e^2+\left(\frac{1+\sqrt{5}}{2}\right)K_e-(1+\sqrt{5})K_e O_e-(1+\sqrt{5})O_e+(2+\sqrt{5})O_e^2\geq 0.$$ Easy calculations show that this is always verified for any pair of non-negative integers $(K_e,O_e)$.\qed
\end{proof}

The dual constraints get tight for pairs of the form $(1,1)$, while the asymptotical dual constraints get tight for pairs of the form $(\frac{3+\sqrt{5}}{2}O_e,O_e)$. These pairs exactly characterize the structure of the lower bounding instance derived in \cite{BFFM09}.

For weighted players, a slightly more involved analysis is needed.

\begin{theorem}\label{apx-w}
For any $\cal G$ with weighted players, it holds Apx$^1_\emptyset({\cal G})\leq 4+2\sqrt{3}$.
\end{theorem}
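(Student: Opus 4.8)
The plan is to follow the same template as the unweighted case (Theorem~\ref{apx-un}), namely to exhibit a feasible dual solution with objective $\gamma=4+2\sqrt{3}$, but now the dual multipliers cannot be taken constant: the contribution of each player to the dual constraint must scale with her weight, so I would set $y_i=c\,w_i$ for the constant $c=2+\frac{2\sqrt{3}}{3}$ (together with $\gamma=4+2\sqrt{3}$). With this choice the dual constraint for a fixed resource $e$ splits into a ``$K$-side'' sum $\sum_{i:e\in s_i^K}y_i(K_e(i)+w_i)$ and an ``$O$-side'' sum $\sum_{i:e\in s_i^O}y_i(K_e(i)+w_i)$, and the goal is to bound the former from below and the latter from above by expressions in $K_e$ and $O_e$ alone, thereby reducing the inequality to a single quadratic form in $(K_e,O_e)$.

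The two bounds are exactly where the weighted analysis departs from the unweighted one. For the $K$-side, I would order the players using $e$ in $K$ by arrival time, with weights $a_1,\dots,a_k$, so that $K_e(i_j)+w_{i_j}=a_1+\cdots+a_j$; the resulting sum equals $c\sum_{l\le j}a_l a_j=\frac{c}{2}\big(K_e^2+\sum_j a_j^2\big)\ge\frac{c}{2}K_e^2$, using $\sum_j a_j=K_e$. For the $O$-side, exactly as in the unweighted proof the worst case occurs when every player using $e$ in $O$ enters after all players using $e$ in $K$, so that $K_e(i)=K_e$; the sum is then at most $c\sum_l b_l(K_e+b_l)=c\big(K_e O_e+\sum_l b_l^2\big)\le c(K_e O_e+O_e^2)$, where $b_1,\dots,b_{k'}$ are the weights of the $O$-players on $e$ and I use $\sum_l b_l^2\le(\sum_l b_l)^2=O_e^2$.

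Substituting these two estimates, the dual constraint is implied by $\left(\tfrac{c}{2}-1\right)K_e^2-c\,K_e O_e+(\gamma-c)O_e^2\ge 0$, and I would finish by checking that $c=2+\frac{2\sqrt{3}}{3}$ and $\gamma=4+2\sqrt{3}$ make this a nonnegative quadratic form for all nonnegative reals $(K_e,O_e)$: the leading coefficient is $\tfrac{c}{2}-1=\tfrac{1}{\sqrt{3}}>0$ and the discriminant $c^2-4(\tfrac{c}{2}-1)(\gamma-c)$ vanishes, so the form is a perfect square. The value $\gamma=4+2\sqrt{3}$ is precisely what one obtains by minimizing $\gamma(c)=c+\frac{c^2}{2(c-2)}$, the smallest $\gamma$ compatible with a zero discriminant, over $c>2$. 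I expect the main obstacle to be the $K$-side estimate: unlike the unweighted setting, where the first sum is the fixed quantity $K_e(K_e+1)/2$, here it depends on both the split of $K_e$ into individual weights and on the arrival order, and the crux is recognizing the identity $\sum_{l\le j}a_l a_j=\frac{1}{2}(K_e^2+\sum_j a_j^2)$ and verifying that dropping the nonnegative term $\sum_j a_j^2$ yields the correct worst-case bound.
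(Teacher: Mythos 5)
Your proposal is correct and follows essentially the same route as the paper: the same dual solution $y_i=\bigl(2+\tfrac{2}{\sqrt{3}}\bigr)w_i$, $\gamma=4+2\sqrt{3}$, the same worst-case ordering assumption for the $O$-players, and the same two estimates ($\sum_{l\le j}a_la_j\ge\tfrac12 K_e^2$ and $\sum_l b_l^2\le O_e^2$) reducing the dual constraint to the quadratic form $\tfrac{1}{\sqrt{3}}K_e^2-\bigl(2+\tfrac{2}{\sqrt{3}}\bigr)K_eO_e+\bigl(2+\tfrac{4}{\sqrt{3}}\bigr)O_e^2\ge 0$. In fact you supply more detail than the paper does, including the optimization over $c$ that explains where the constants come from.
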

\begin{proof}
Again, observe that the worst-case dual constraints occur when each player $i$ using resource $e$ in $O$ enters the game after all players using $e$ in $K$ have entered the game yet. In such a case, by choosing $y_i=\left(2+\frac{2}{\sqrt{3}}\right)w_i$ for any $i\in [n]$ and $\gamma=4+2\sqrt{3}$, the worst-case dual inequalities become $$\left(2+\frac{2}{\sqrt{3}}\right)\left(\sum_{i\leq j:e\in s_i^K\cap s_j^K}(w_i w_j)-\sum_{i:e\in s_i^O}\left(w_i(K_e+w_i)\right)\right)+\left(4+2\sqrt{3}\right)O_e^2\geq K_e^2$$ which is true if it holds $$\frac{1}{\sqrt{3}}K_e^2-\left(2+\frac{2}{\sqrt{3}}\right)K_e O_e+\left(2+\frac{4}{\sqrt{3}}\right)O_e^2\geq 0.$$ Easy calculations show that this is always verified for any pair of non-negative reals $(K_e,O_e)$.\qed
\end{proof}

Note that this is the only case among the ones considered so far for which the dual variables are not player independent. The worst case dual constraints occur when all players using resource $e$ in the walk have weight $1$, while only one player uses $e$ in the optimal solution. Moreover, the asymptotical dual constraints get tight for pairs of the form $((1+\sqrt{3})O_e,O_e)$. In this case, the best known lower bound, equal to $3+2\sqrt{2}$, has been given in \cite{CFKKM06}.

\section{Quadratic and Cubic Latency Functions}\label{secpoly1}
In this section, we show how to use the primal-dual method to bound PoS and Apx$^1_\emptyset$ in the case of polynomial latency functions of maximum degree $d$ and unweighted players. We only consider the case $d\leq 3$, that is, quadratic and cubic latency functions. It is not difficult to extend the approach to any particular value of $d$, but it is quite hard to obtain a general result as a function of $d$ because we do not have simple closed formulas expressing some of the summations we need in our analysis for any value of $d$. We also leave the extension to $\epsilon$-PNE and weighted players for future works. We restrict to the cases in which the latency functions are of the form $\ell_e(x)=\alpha_e x^d$, since it is possible to show that this can be supposed without loss of generality.

\subsection{Bounding the Price of Stability}
For $d=2$, the potential function is $\Phi(S)=\frac 1 6\sum_{e\in E}{\cal L}_e(S)({\cal L}_e(S)+1)(2{\cal L}_e(S)+1)$. Thus, the constraint $\Phi(K)\leq\Phi(O)$ becomes $\sum_{e\in E}\left(K_e(K_e+1)(2K_e+1)-O_e(O_e+1)(2O_e+1)\right)\leq 0.$ Similarly, the constraint $\sum_{i\in [n]}\left(\Phi(K)-\Phi(K_{-i}\diamond s_i^O)\right)\leq 0$ becomes $\sum_{e\in E}\left(K_e^3-O_e(K_e+1)^2\right)\leq 0.$
Hence, $DLP(K,O)$ is defined as follows.

\begin{displaymath}
\begin{array}{rlr}
minimize & \gamma\\
subject\ to\\
\left(y(K_e(K_e+1)(2K_e+1)-O_e(O_e+1)(2O_e+1))\right)\\
+\left(z(K_e^3-O_e(K_e+1)^2)\right)+\gamma O_e^3 & \geq K_e^3, & \forall e\in E\\
y,z & \geq 0
\end{array}
\end{displaymath}

\begin{theorem}
For any $\cal G$ with quadratic latency functions and unweighted players, it holds PoS$({\cal G})\leq 2.362$.
\end{theorem}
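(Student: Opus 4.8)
The plan is to exhibit a feasible solution of the dual program $DLP(K,O)$ whose objective equals $2.362$: by weak LP duality this bounds the optimum of $LP(K,O)$, which in turn upper bounds $\mathrm{PoS}(\mathcal{G})$ for every game $\mathcal{G}$ with quadratic latencies and unweighted players. Concretely, I would fix nonnegative constants $y$ and $z$, set $\gamma = 2.362$, and then show that the single family of dual inequalities holds for every pair of nonnegative integers $(K_e,O_e)$. This is exactly the template of Theorems \ref{pos-un}--\ref{apx-w}: once the right constants are produced, the statement collapses to verifying one polynomial inequality.

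To produce $y$ and $z$ and to explain the value $2.362$, I would read off the worst-case pairs from the complementary slackness conditions, just as in the discussion following Theorem \ref{pos-un}. For the price of stability the relevant tight constraints are the small integer pairs $(K_e,O_e)=(0,1)$ and $(1,1)$ together with an asymptotic ray $K_e = c\,O_e$ with $O_e\to\infty$. Imposing tightness at $(0,1)$ gives $\gamma = 6y+z$ (since $O_e(O_e+1)(2O_e+1)=6$ and $O_e(K_e+1)^2=1$ there), and at $(1,1)$ gives $\gamma = 1+3z$; along the ray only the leading $O_e^3$ term survives, producing the cubic $h(c) = (2y+z-1)c^3 - z\,c^2 + (\gamma-2y)$, which must be nonnegative and attain a double root at the worst-case ratio $c^* = \tfrac{2z}{3(2y+z-1)}$ (mirroring the discriminant-zero condition in the affine case). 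Solving the system $\gamma=6y+z$, $\gamma=1+3z$, $h(c^*)=0$, $h'(c^*)=0$ numerically pins down $y$, $z$ and an optimal value $\gamma^*=2.361\ldots$, which I would round up to $2.362$ to report a clean, slightly slack upper bound.

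With the constants fixed, substituting them into the dual constraint and clearing denominators reduces the claim to a single inequality $f(K_e,O_e)\geq 0$ that is cubic in each of $K_e$ and $O_e$, to be checked over the nonnegative integer lattice. Here integrality is the heart of the matter and the main obstacle: just as the affine proof of Theorem \ref{pos-un} leans on facts such as $O_e-O_e^2 = O_e(1-O_e)\leq 0$, which are false for real $O_e\in(0,1)$, the value $2.362$ is a genuinely integral bound and lies below what a continuous relaxation would force, so one cannot appeal to a real minimization and must argue on the lattice. I would therefore fix $O_e$, regard $f(\cdot,O_e)$ as a cubic in $K_e$ with positive leading coefficient $2y+z-1$, locate its real minimizer near $c^*O_e$, and verify $f\geq 0$ at the integer(s) bracketing $c^*O_e$ and at the small values of $K_e$; this collapses to one-variable polynomial inequalities in $O_e$ that hold for all large $O_e$ by inspecting the sign of the leading coefficient, after the finitely many dangerous small pairs (those with $O_e$ below a modest threshold, together with $O_e=0$) are checked directly. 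The delicate point throughout is the bookkeeping that confines the only potential violations to non-integer finite pairs, so that the gaps between consecutive integers—reinforced by the rounding of $\gamma$ up to $2.362$—keep $f$ nonnegative.
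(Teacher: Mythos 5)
Your proposal takes exactly the same route as the paper: the paper's entire proof is the one-line assertion that the choice $y=0.318$, $z=0.453$, $\gamma=2.362$ is dual feasible, leaving both the origin of these constants and the integer-lattice verification implicit, and what you supply is precisely those two missing pieces. Your derivation is the right one: tightness at $(K_e,O_e)=(0,1)$ forces $\gamma=6y+z$, tightness at $(1,1)$ forces $\gamma=1+3z$, and the double-root condition on the asymptotic cubic $(2y+z-1)c^3-zc^2+(\gamma-2y)$ closes the system, yielding $171z^3-90z^2-12z+8=0$, i.e.\ $z\approx 0.4540$, $y\approx 0.318$, $\gamma\approx 2.3617$. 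This extra care turns out not to be cosmetic. With the paper's rounded value $z=0.453$ the asymptotic cubic equals $0.089c^3-0.453c^2+1.726$, whose minimum over $c>0$ is about $-0.013$ at $c\approx 3.39$; consequently the published dual point is infeasible for large integer pairs near that ratio (for instance $(K_e,O_e)=(3390,1000)$ violates the dual constraint by roughly $5.7\times 10^6$), whereas taking $z=0.454$ with $y=0.318$ and $\gamma=2.362$ keeps the $(0,1)$ and $(1,1)$ constraints tight and makes the asymptotic minimum positive again. You are also right that the bound is genuinely integral --- the constraint fails for real pairs $(0,t)$ as $t\to 0^+$ --- and your plan of checking the integers bracketing $c^*O_e$ together with the finitely many small pairs is exactly what is needed to finish. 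In short: same approach as the paper, but your version contains the verification the paper omits, and carrying it out would catch (and repair) a small numerical slip in the stated constants.
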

\begin{proof}
The claim follows by setting $y=0.318$, $z=0.453$ and $\gamma=2.362$.\qed
\end{proof}

For $d=3$, the potential function is $\Phi(S)=\frac 1 4\sum_{e\in E}\left({\cal L}_e(S)({\cal L}_e(S)+1)\right)^2$. Thus, the constraint $\Phi(K)\leq\Phi(O)$ becomes $\sum_{e\in E}\left(\left(K_e(K_e+1)\right)^2-\left(O_e(O_e+1)\right)^2\right)\leq 0.$ Similarly, the constraint $\sum_{i\in [n]}\left(\Phi(K)-\Phi(K_{-i}\diamond s_i^O)\right)\leq 0$ becomes $\sum_{e\in E}\left(K_e^4-O_e(K_e+1)^3\right)\leq 0.$

Hence, $DLP(K,O)$ is defined as follows.
\begin{displaymath}
\begin{array}{rlr}
minimize & \gamma\\
subject\ to\\
\left(y(K_e^2(K_e+1)^2-O_e^2(O_e+1)^2)\right)+\left(z(K_e^4-O_e(K_e+1)^3)\right)+\gamma O_e^4 & \geq K_e^4, & \forall e\in E\\
y,z & \geq 0
\end{array}
\end{displaymath}

\begin{theorem}
For any $\cal G$ with cubic latency functions and unweighted players, it holds PoS$({\cal G})\leq 3.322$.
\end{theorem}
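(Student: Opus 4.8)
The plan is to follow exactly the recipe used for the quadratic case: exhibit an explicit dual‑feasible triple $(y,z,\gamma)$ with $y,z\geq 0$, so that weak duality bounds the optimum of $LP(K,O)$, and therefore $\mathrm{PoS}(\CG)$, from above by $\gamma$. Once $y$ and $z$ are fixed, the single family of dual inequalities of $DLP(K,O)$ collapses to the requirement that
$$P(K_e,O_e):=y\bigl(K_e^2(K_e+1)^2-O_e^2(O_e+1)^2\bigr)+z\bigl(K_e^4-O_e(K_e+1)^3\bigr)+\gamma O_e^4-K_e^4\geq 0$$
hold for every pair of non‑negative integers $(K_e,O_e)$, so the whole proof reduces to choosing $y,z,\gamma$ and checking this one bivariate inequality.

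To locate good coefficients I would first rewrite $P$ collected by powers of $O_e$ (the leading coefficient is $\gamma-y$ and the linear coefficient is $-z(K_e+1)^3$) and isolate its degree‑four behaviour in the regime $K_e=tO_e$, $O_e\to\infty$; the limiting constraint is $\gamma\geq(1-y-z)t^4+zt^3+y$ for all $t\geq 0$, whose right‑hand side, when $y+z>1$, is maximised at $t^\ast=3z/(4(y+z-1))$. Guided by complementary slackness I would then impose three tightness conditions: equality at the two smallest critical lattice pairs $(K_e,O_e)=(0,1)$ and $(1,1)$, giving $\gamma=4y+z$ and $\gamma=1+7z$, together with tightness of the asymptotic bound at $t^\ast$. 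Solving this $3\times 3$ system numerically yields $z\approx0.3317$, $y\approx0.7475$ and $\gamma\approx3.3216$; rounding $\gamma$ up to $3.322$ (keeping $y,z$ so that the rounded triple remains feasible) gives the stated value, just as the quadratic proof simply announces its numbers.

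The genuine difficulty, and the only non‑routine step, is verifying that $P(K_e,O_e)\geq 0$ on the \emph{entire} non‑negative integer lattice rather than merely at the three tight configurations. This is essential because integrality cannot be dropped: for the chosen coefficients $P$ is already negative at the real point $(K_e,O_e)=(0,\tfrac12)$, so any argument proving non‑negativity over $\RP\times\RP$ is doomed. I would instead fix the integer $K_e$ and view $P(K_e,\cdot)$ as a quartic in $O_e$ with positive leading coefficient $\gamma-y$; since it tends to $+\infty$, for each $K_e$ it suffices to test the finitely many integers $O_e$ near its real minimiser, while at $O_e=0$ non‑negativity, $P(K_e,0)=(y+z-1)K_e^4+2yK_e^3+yK_e^2\geq 0$, is immediate from $y+z>1$. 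The delicate point is making the large‑$K_e$ regime rigorous uniformly: there the minimising $O_e$ grows linearly in $K_e$ with ratio tending to $t^\ast$, exactly where the asymptotic quartic is tangent to zero, so the subleading terms of $P$ must be controlled to guarantee that the lattice value stays non‑negative; only a finite range of small $K_e$ then remains, to be dispatched by direct evaluation.
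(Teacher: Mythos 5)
Your proposal is correct and follows essentially the same route as the paper: the paper's entire proof consists of exhibiting the dual-feasible triple $y=0.747$, $z=0.331$, $\gamma=3.322$ and invoking weak duality, which is exactly the triple your complementary-slackness derivation recovers (up to rounding). You are in fact more careful than the paper, which silently leaves the lattice verification of the dual constraints to the reader; your observations that the tight configurations are $(0,1)$, $(1,1)$ and the asymptotic ray $K_e\approx t^\ast O_e$, and that non-negativity genuinely requires integrality of the congestions, are both accurate and absent from the paper's one-line argument.
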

\begin{proof}
The claim follows by setting $y=0.747$, $z=0.331$ and $\gamma=3.322$.\qed
\end{proof}

It is not difficult to extend the instances given in \cite{CKS10} so as to obtain lower bounds of $2.1859$ and $2.7558$, respectively (see Subsection \ref{seclb} in the Appendix).

\subsection{Bounding the Approximation Ratio of One-Round Walks}
For $d=2$, $DLP(K,O)$ is defined as follows.
\begin{displaymath}
\begin{array}{rlr}
minimize & \gamma\\
subject\ to\\
\displaystyle\sum_{i:e\in s_i^K}\left(y_i (K_e(i)+1)^2\right)-\sum_{i:e\in s_i^O}\left(y_i (K_e(i)+1)^2\right)+\gamma O_e^3 & \geq K_e^3, & \forall e\in E\\
y_i & \geq 0, & \forall i\in [n]
\end{array}
\end{displaymath}

\begin{theorem}\label{apx-2}
For any $\cal G$ with quadratic latency functions and unweighted players, it holds Apx$^1_\emptyset({\cal G})\leq 37.5888$.
\end{theorem}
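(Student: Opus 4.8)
The plan is to follow the same route as in the proof of Theorem~\ref{apx-un}, replacing the quadratic social cost of the affine case with the cubic one induced by quadratic latencies. First I would argue, exactly as there, that the worst-case dual constraint for a fixed resource $e$ arises when every player $i$ with $e\in s_i^O$ performs her best-response after all players with $e\in s_i^K$ have already entered the walk. Under this ordering, the $k$-th player entering $e$ through $K$ sees congestion $K_e(i)=k-1$ (recall $w_i=1$), so her contribution $(K_e(i)+1)^2$ equals $k^2$, and summing over the $K_e$ such players gives $\sum_{k=1}^{K_e}k^2=\frac{K_e(K_e+1)(2K_e+1)}{6}$. Each of the $O_e$ players routing through $e$ in $O$ instead sees the full congestion $K_e$, contributing $(K_e+1)^2$ apiece.

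Next I would collapse the family of dual constraints by setting $y_i=y$ for a suitable constant $y$ and fixing $\gamma=37.5888$. With these choices the worst-case dual inequality reduces to
\[
y\,\frac{K_e(K_e+1)(2K_e+1)}{6}-y\,O_e(K_e+1)^2+\gamma O_e^3\geq K_e^3,
\]
which must be verified for every pair of non-negative integers $(K_e,O_e)$. The constant $37.5888$ is obtained by choosing $y$ so as to minimize $\gamma$ subject to this family of inequalities; the binding constraints are expected to split into a handful of small integer pairs together with an asymptotic regime in which $K_e$ and $O_e$ both diverge with a fixed ratio $t=K_e/O_e$. In that regime, dividing by $O_e^3$ and keeping only the leading terms yields the cubic condition $\left(\frac{y}{3}-1\right)t^3-y\,t^2+\gamma\geq 0$, whose minimum over $t\geq 0$ dictates how large $\gamma$ must be for a given $y$ (and, in particular, forces $y>3$ so that the left-hand side stays bounded below).

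The main obstacle, compared with the clean affine analysis where $y$ and $\gamma$ come out as closed-form surds, is that here the optimal $y$ and $\gamma$ are numerically optimized constants, so certifying non-negativity of the displayed cubic over all non-negative integers requires genuine care: the real-valued minimizer of the left-hand side need not be an integer, and the finitely many small pairs must be checked separately from the asymptotic limit. Concretely, the hardest step is to pin down exactly which pair (or limiting ratio) is binding and to confirm that, for the corresponding value of $y$, no other integer pair violates the inequality; once $y$ is fixed this becomes the routine but delicate calculus-and-case-analysis verification that the proof in the excerpt summarizes as ``easy calculations''.
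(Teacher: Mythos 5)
Your proposal follows exactly the paper's route: the same worst-case ordering argument (each player using $e$ in $O$ arrives after all players using $e$ in $K$), the same uniform dual assignment $y_i=y$, and the same reduced inequality $y\,\frac{K_e(K_e+1)(2K_e+1)}{6}-y\,O_e(K_e+1)^2+\gamma O_e^3\geq K_e^3$; the paper simply states the optimizing value $y_i=5.2944$ with $\gamma=37.5888$ (for which the binding constraints are the integer pair $(K_e,O_e)=(3,1)$, giving $\gamma=2y+27$, together with the asymptotic ratio regime you identify) and asserts the verification as ``easy calculations''. Your account of how $y$ and $\gamma$ are determined is consistent with this, so the proposal is correct and essentially identical to the paper's proof.
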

\begin{proof}
As usual, the worst-case dual constraints occur when each player $i$ using resource $e$ in $O$ enters the game after all players using $e$ in $K$ have entered the game yet. The claim follows by choosing $y_i=5.2944$ for any $i\in [n]$ and $\gamma=37.5888$.\qed
\end{proof}

For $d=3$, $DLP(K,O)$ is defined as follows.
\begin{displaymath}
\begin{array}{rlr}
minimize & \gamma\\
subject\ to\\
\displaystyle\sum_{i:e\in s_i^K}\left(y_i (K_e(i)+1)^3\right)-\sum_{i:e\in s_i^O}\left(y_i (K_e(i)+1)^3\right)+\gamma O_e^4 & \geq K_e^4, & \forall e\in E\\
y_i & \geq 0, & \forall i\in [n]
\end{array}
\end{displaymath}

\begin{theorem}\label{apx-3}
For any $\cal G$ with cubic latency functions and unweighted players, it holds Apx$^1_\emptyset({\cal G})\leq\frac{17929}{34}\approx 527.323$.
\end{theorem}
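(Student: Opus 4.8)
The plan is to mirror the proofs of Theorems~\ref{apx-un} and~\ref{apx-2}: I would exhibit a player-independent feasible dual solution $y_i=y$, $\gamma=\frac{17929}{34}$, and collapse the whole family of dual constraints into a single polynomial inequality in the non-negative integers $K_e$ and $O_e$. First I would note, exactly as in the affine and quadratic cases, that the worst-case dual constraint arises when every player using $e$ in $O$ enters the walk only after all the players using $e$ in $K$ have entered. Under this ordering the players using $e$ in $K$ contribute, in order of arrival, the terms $(K_e(i)+1)^3=1^3,2^3,\dots,K_e^3$, so that $\sum_{i:e\in s_i^K}(K_e(i)+1)^3=\sum_{j=1}^{K_e}j^3=\left(\frac{K_e(K_e+1)}{2}\right)^2$ by the sum-of-cubes identity, while each of the $O_e$ players using $e$ in $O$ contributes the maximal term $(K_e+1)^3$. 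The dual constraint then reduces to
\begin{equation}
\gamma\,O_e^4\;\geq\;K_e^4-y\left(\frac{K_e(K_e+1)}{2}\right)^2+y\,O_e\,(K_e+1)^3.\label{eqcubic}
\end{equation}

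To pin down the constants I would treat \eqref{eqcubic} as a minimax problem: for a fixed $y>4$ the smallest feasible $\gamma$ is the maximum over integer pairs of the right-hand side divided by $O_e^4$, and this is then minimized over $y$. The key observation is that the binding constraints are \emph{not} asymptotic. Setting $K_e=\rho\,O_e$ and letting both grow, the right-hand side of \eqref{eqcubic} divided by $O_e^4$ tends to $\rho^4\!\left(1-\frac{y}{4}\right)+y\rho^3$, whose maximum over $\rho\geq0$ equals $\frac{27y^4}{4(y-4)^3}$; at the eventual optimum this asymptotic requirement is far below $\frac{17929}{34}$, so it is inactive. The genuinely tight pairs live at $O_e=1$. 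Freezing $O_e=1$ turns the right-hand side into one affine function of $y$ per integer $K_e$, namely $K_e^4+y(K_e+1)^2\,\frac{4K_e+4-K_e^2}{4}$; these lines have positive slope for $K_e\leq4$ and negative slope for $K_e\geq5$, so their upper envelope is convex in $y$ and is minimized where the line for $K_e=4$ (equal to $256+25y$) meets the line for $K_e=5$ (equal to $625-9y$). Solving $256+25y=625-9y$ yields $y=\frac{369}{34}$ with common value $\frac{17929}{34}$, fixing the constants claimed in the statement.

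It then remains to verify that the pair $y=\frac{369}{34}$, $\gamma=\frac{17929}{34}$ is dual feasible, i.e. that \eqref{eqcubic} holds for \emph{all} non-negative integers $(K_e,O_e)$ and not merely for the conjectured extremal pairs $(4,1)$ and $(5,1)$. For each fixed $O_e$ the right-hand side of \eqref{eqcubic} is a convex quartic in $K_e$, so its integer maximizer lies at one of the two integers bracketing the real critical point; checking the finitely many resulting pairs against $\gamma\,O_e^4$ closes the argument, with the $O_e=0$ case reducing to $y\left(\frac{K_e(K_e+1)}{2}\right)^2\geq K_e^4$ (true since $y>4>\frac{4K_e^2}{(K_e+1)^2}$) and the large-$O_e$ regime covered by the asymptotic bound above.

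I expect the main obstacle to be precisely this feasibility verification: because the $(K_e+1)^3$ term breaks the homogeneity of \eqref{eqcubic}, one cannot rescale away the dependence on the magnitudes of $K_e$ and $O_e$, and the true maximum could a priori sit at an unexpected small pair with $O_e\geq2$. I would therefore handle it by the two-pronged estimate sketched above, a finite case check over small $(K_e,O_e)$ together with the clean asymptotic-cone bound, which jointly guarantee that no pair forces $\gamma$ above $\frac{17929}{34}$; by complementary slackness this same analysis also singles out $(4,1)$ and $(5,1)$ as the combinatorial building blocks of a matching lower-bounding instance.
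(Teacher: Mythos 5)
Your proposal follows exactly the paper's route: the same worst-case-ordering reduction of the dual constraints (each player using $e$ in $O$ arrives after all players using $e$ in $K$, giving $\sum_{j=1}^{K_e}j^3=\left(\frac{K_e(K_e+1)}{2}\right)^2$ on one side and $O_e(K_e+1)^3$ on the other) and the identical dual assignment $y_i=\frac{369}{34}$, $\gamma=\frac{17929}{34}$, which the paper simply states without derivation. Your additional work --- deriving the constants from the intersection of the $K_e=4$ and $K_e=5$ lines at $O_e=1$, checking the asymptotic cone bound $\frac{27y^4}{4(y-4)^3}$ is slack, and flagging the finite check needed for $O_e\geq 2$ --- correctly supplies the feasibility verification that the paper leaves implicit.
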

\begin{proof}
As usual, the worst-case dual constraints occur when each player $i$ using resource $e$ in $O$ enters the game after all players using $e$ in $K$ have entered the game yet. The claim follows by choosing $y_i=\frac{369}{34}$ for any $i\in [n]$ and $\gamma=\frac{17929}{34}$.\qed
\end{proof}

\

\noindent{\bf Acknowledgments:} Some of the ideas this work relies upon originated after several seminal discussions with Michele Flammini, Angelo Fanelli and Luca Moscardelli to whom the author is greatly indebted.

\newpage
\section{Appendix}

\subsection{Proof of Lemma \ref{lemma1}}
\begin{proof}
Consider the weighted congestion game $\CG=([n],E,\Sigma,\ell,w)$ with latency functions $\ell_e(x) = \alpha_e x + \beta_e$ for any $e\in E$. For each $\widetilde{e}\in E$ such that $\beta_{\widetilde{e}} > 0$, let $N_{\widetilde{e}}$ be the set of players who can choose $\widetilde{e}$, that is, $N_{\widetilde{e}}=\{i\in [n]:\exists s\in \Sigma_i: \widetilde{e}\in s\}$. The set of resources $E'$ is obtained by replicating all the resources in $E$ and adding a new resource $e_{\widetilde{e}}^i$ for any $\widetilde{e}\in E$ and any $i\in N_{\widetilde{e}}$, that is, $E'=E\cup\bigcup_{\widetilde{e}\in E,i\in N_{\widetilde{e}}}\{e_{\widetilde{e}}^i\}$. The latency functions are defined as $\ell'_e(x)=\alpha_e x$ for any $e\in E'\cap E$ and $\ell'_{e_{\widetilde{e}}^i}(x)=\frac{\beta_{\widetilde{e}}}{w_i} x$ for any $\widetilde{e}\in E$ and any $i\in N_{\widetilde{e}}$. Finally, for any $i\in [n]$, the mapping $\varphi_i$ is defined as follows: $\varphi_i(s)=s\cup\bigcup_{\widetilde{e}\in s}\{e^i_{\widetilde{e}}\}$. It is not difficult to see that for any $S=(s_1,\ldots,s_n)\in\Sigma$ and for any $i\in [n]$, it holds $c_i(S)=c_i(\varphi_1(s_1),\ldots,\varphi_n(s_n))$.\qed
\end{proof}

\subsection{Proof of Theorem \ref{posw}}\label{PoSW}

We define the following $\epsilon$-potential function.
$$\Phi_\epsilon(S)=\frac 1 2 \sum_{e\in E}\left(\alpha_e {\cal L}_e(S)^2\right)+\frac 1 2 \frac{1-\epsilon}{1+\epsilon}\sum_{e\in E}\sum_{i:e\in s_i}\left(\alpha_e w_i^2\right).$$

\begin{lemma}
Any profile which is a local minimum of $\Phi_\epsilon$ is an $\epsilon$-PNE.
\end{lemma}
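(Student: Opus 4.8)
The plan is to run the standard potential-function argument: I will compute exactly how $\Phi_\epsilon$ changes under a single unilateral deviation and compare this change with the change in the deviating player's cost, so that the inequality $\Phi_\epsilon(S_{-i}\diamond t)\geq\Phi_\epsilon(S)$ forced by local-minimality of $S$ translates into the defining inequality $c_i(S)\leq(1+\epsilon)c_i(S_{-i}\diamond t)$ of an $\epsilon$-PNE.

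First I would fix a local minimum $S$, a player $i$ with $w_i>0$ (the case $w_i=0$ is trivial, such a player having zero cost on every resource), and an arbitrary deviation $t\in\Sigma_i$; set $S'=(S_{-i}\diamond t)$ and write ${\cal L}_e={\cal L}_e(S)$. Computing $\Phi_\epsilon(S')-\Phi_\epsilon(S)$ resource by resource, only resources in $t\setminus s_i$ (where the congestion rises by $w_i$) and in $s_i\setminus t$ (where it drops by $w_i$) contribute, since on $s_i\cap t$ and outside $s_i\cup t$ both terms of $\Phi_\epsilon$ are unchanged. The pleasant simplification is that the quadratic and the weighted linear term combine through $1+\frac{1-\epsilon}{1+\epsilon}=\frac{2}{1+\epsilon}$ and $1-\frac{1-\epsilon}{1+\epsilon}=\frac{2\epsilon}{1+\epsilon}$, yielding a per-resource contribution $\alpha_e\left(w_i{\cal L}_e+\frac{w_i^2}{1+\epsilon}\right)$ on $t\setminus s_i$ and $\alpha_e\left(-w_i{\cal L}_e+\frac{\epsilon w_i^2}{1+\epsilon}\right)$ on $s_i\setminus t$. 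Local-minimality makes the total nonnegative, and dividing by $w_i$ gives the linear inequality
$$\sum_{e\in t\setminus s_i}\alpha_e\left({\cal L}_e+\frac{w_i}{1+\epsilon}\right)\geq\sum_{e\in s_i\setminus t}\alpha_e\left({\cal L}_e-\frac{\epsilon w_i}{1+\epsilon}\right). \qquad (\dagger)$$

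Next I would expand $c_i(S)-(1+\epsilon)c_i(S')$ over the same three groups: the overlap $s_i\cap t$ contributes $-\epsilon\sum_{e\in s_i\cap t}\alpha_e{\cal L}_e\leq0$, so it suffices to prove $\sum_{e\in s_i\setminus t}\alpha_e{\cal L}_e\leq(1+\epsilon)\sum_{e\in t\setminus s_i}\alpha_e({\cal L}_e+w_i)$. Multiplying $(\dagger)$ by $(1+\epsilon)$ and substituting it into the right-hand side, after cancellation the remaining gap equals $\epsilon\bigl(\sum_{e\in s_i\setminus t}\alpha_e({\cal L}_e-w_i)+w_i\sum_{e\in t\setminus s_i}\alpha_e\bigr)$. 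The step I expect to be the crux is recognizing that this residual is nonnegative: it uses $\alpha_e,w_i,\epsilon\geq0$ together with the structural fact that ${\cal L}_e\geq w_i$ on every $e\in s_i$ (player $i$ herself contributes $w_i$ to the congestion of each resource she uses in $S$). This is precisely where the asymmetric coefficient $\frac{1-\epsilon}{1+\epsilon}$ in the second sum of $\Phi_\epsilon$ is calibrated to make the argument close, and aligning the two inequalities after the $(1+\epsilon)$-scaling is the only delicate bookkeeping. Since $t$ was arbitrary, this establishes the $\epsilon$-PNE condition for every player.
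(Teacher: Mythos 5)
Your proposal is correct and follows essentially the same route as the paper: you compute the same per-resource change of $\Phi_\epsilon$ (with the same $\frac{w_i^2}{1+\epsilon}$ and $\frac{\epsilon w_i^2}{1+\epsilon}$ terms), invoke local minimality, and close the argument with the same two facts the paper uses, namely ${\cal L}_e(S)\geq w_i$ on every $e\in s_i$ and the $\epsilon$-slack on the resources of $t$; only the bookkeeping differs (you split over $t\setminus s_i$, $s_i\setminus t$ and the overlap, while the paper regroups into sums over $t$ and $s_i$ with a $t\cap s_i$ correction). Your explicit handling of the $w_i=0$ case is a minor tidiness the paper omits.
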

\begin{proof}
Consider a profile $S=(s_1,\ldots,s_n)$. We want to compute the change in the $\epsilon$-potential function when player $i$ changes her strategy from $s_i$ to $t$. The resulting profile $(S_{-i}\diamond t)$ has
\begin{displaymath}
{\cal L}_e(S_{-i}\diamond t) =\left\{
\begin{array}{lc}
{\cal L}_e(S)-w_i, & \ \ e\in s_i\setminus t,\\
{\cal L}_e(S)+w_i, & \ \ e\in t\setminus s_i,\\
{\cal L}_e(S), & \ \ otherwise.
\end{array}\right.
\end{displaymath}
From this we can compute the difference
$$\Phi_\epsilon(S_{-i}\diamond t)-\Phi_\epsilon(S)=\sum_{e\in t\setminus s_i}\left(\alpha_e\left(w_i {\cal L}_e(S)+\frac{w_i^2}{1+\epsilon}\right)\right)-\sum_{e\in s_i\setminus t}\left(\alpha_e\left(w_i {\cal L}_e(S)-\frac{\epsilon w_i^2}{1+\epsilon}\right)\right).$$
We can rewrite this as
$$\Phi_\epsilon(S_{-i}\diamond t)-\Phi_\epsilon(S)=$$ $$\sum_{e\in t}\left(\alpha_e\left(w_i {\cal L}_e(S)+\frac{w_i^2}{1+\epsilon}\right)\right)-\sum_{e\in t \cap s_i}\left(\alpha_e w_i^2\right)-\sum_{e\in s_i}\left(\alpha_e\left(w_i {\cal L}_e(S)-\frac{\epsilon w_i^2}{1+\epsilon}\right)\right).$$
Suppose now that $S$ is a local minimum of $\Phi_\epsilon$ which implies $\Phi_\epsilon(S)\leq\Phi_\epsilon(S_{-i}\diamond t)$ for any $i\in [n]$ and $t\in\Sigma_i$. The cost of player $i$ before the change is $c_i(S)=\sum_{e\in s_i}\left(\alpha_e {\cal L}_e(S)\right)$ and after the change is $c_i(S_{-i}\diamond t)=\sum_{e\in t}\left(\alpha_e {\cal L}_e(S_{-i}\diamond t)\right)$. We show that $S$ is an $\epsilon$-PNE, that is, $c_i(S)\leq (1+\epsilon)c_i(S_{-i}\diamond t)$.

By exploiting the two different parts defining the $\epsilon$ potential we obtain
$$c_i(S)=\sum_{e\in s_i}\left(\alpha_e {\cal L}_e(S)\right)\leq\sum_{e\in s_i}\left(\alpha_e(1+\epsilon) \left({\cal L}_e(S)-\frac{\epsilon w_i}{1+\epsilon}\right)\right)$$
which holds because ${\cal L}_e(S)\geq w_i$ when $e\in s_i$, and
$$c_i(S_{-i}\diamond t)=\sum_{e\in t}\left(\alpha_e ({\cal L}_e(S_{-i}\diamond t)+w_i)\right)-\sum_{e\in t \cap s_i}\left(\alpha_e w_i\right)$$ $$\geq\sum_{e\in t}\left(\alpha_e \left({\cal L}_e(S)+\frac{w_i}{1+\epsilon}\right)\right)-\sum_{e\in t \cap s_i}\left(\alpha_e w_i\right)$$
which holds for any $\epsilon\geq 0$.

It follows immediately that $c_i(S)\leq (1+\epsilon)c_i(S_{-i}\diamond t)$, thus $S$ is an $\epsilon$-PNE.\qed
\end{proof}

Using the constraint $\Phi_\epsilon(K)-\Phi_\epsilon(O)\leq 0$ in our formulation, we can easily prove Theorem \ref{posw}.

\begin{proof}
In such a case, by choosing $y_i=0$ for any $i\in [n]$, $z=1$ and $y_{n+1}=\frac{2}{1+\epsilon}$, the dual inequalities become $$\frac{1-\epsilon}{1+\epsilon}K_e\geq 0$$ which is always verified for any pair of non-negative reals $(K_e,O_e)$.\qed
\end{proof}

\subsection{Lower Bounds on the PoS of Quadratic and Cubic Latency Functions}\label{seclb}
\begin{theorem}
For any $\delta > 0$, there exist an unweighted congestion game with quadratic latency functions
${\cal G}_1$ and an unweighted congestion game with cubic latency functions ${\cal G}_2$ such that $PoS({\cal G}_1)\geq 2.1859-\delta$ and
$PoS({\cal G}_2)\geq 2.7558-\delta$.
\end{theorem}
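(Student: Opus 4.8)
The plan is to recycle the combinatorial skeleton of the $\epsilon$-PoS lower bounding instances of \cite{CKS10} (specialized to $\epsilon=0$) and to transplant it onto monomial latencies $\ell_e(x)=\alpha_e x^d$ with $d\in\{2,3\}$, retuning the coefficients $\alpha_e$ so that the intended profile remains the selected equilibrium. Concretely, I would fix a large integer $n_1$ and build a family of unweighted games with roughly $n_1$ two-strategy players, each player $i$ owning an ``optimal'' strategy $s_i^O$ and an ``equilibrium'' strategy $s_i^K$. The resources split into three groups, classified by their congestion pair $(K_e,O_e)$ exactly as the complementary-slackness reading of the PoS upper bound suggests: $n_1$ resources realizing $(0,1)$, $n_1(n_1-1)$ resources realizing $(1,1)$, and a single heavily-loaded core resource realizing $(n_1,\rho n_1)$ for a tunable constant $\rho\in(0,1)$.

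I would then proceed as follows. First, set the monomial coefficients on each group to the smallest values that still render $K$ a pure Nash equilibrium; since every player has exactly two strategies, this collapses to one inequality per player, $c_i(K)\leq c_i(K_{-i}\diamond s_i^O)$, and by the symmetry of the construction only a constant number of distinct inequalities survive. Driving them to equality pins down the $\alpha_e$ as explicit functions of $\rho$ and $d$. Second, I would confirm that $K$ is the \emph{best} equilibrium, so that the ratio I compute genuinely lower bounds PoS rather than PoA; this property is inherited from the forcing structure of the \cite{CKS10} instance and must be re-checked for the new exponent. Third, I would evaluate $\CSUM(K)=\sum_e\alpha_e K_e^{d+1}$ and $\CSUM(O)=\sum_e\alpha_e O_e^{d+1}$, substitute the group multiplicities and the chosen coefficients, and let $n_1\to\infty$; the limiting ratio is then governed by the core and the $(1,1)$-resources and reduces to a closed-form expression in $\rho$ and $d$. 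Finally, maximizing this expression over $\rho$ yields the constants $2.1859$ (for $d=2$) and $2.7558$ (for $d=3$), and for any prescribed $\delta>0$ a sufficiently large $n_1$ pushes the finite-instance ratio above the target minus $\delta$.

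The hard part will be the second step: certifying that the exhibited profile is the best (ideally the unique) equilibrium, since showing that a single profile is a PNE is routine with two-strategy players, whereas ruling out a cheaper equilibrium is precisely what separates a PoS lower bound from a mere PoA lower bound. Here I would lean on the same device as \cite{CKS10}, arranging the strategy sets and coefficients so that each player is pushed toward $s_i^K$ whenever the core is not overloaded, which forces best-response dynamics from every starting profile onto the core and makes $K$ the only equilibrium. The remaining labor beyond this is purely arithmetic: re-solving the tightness equations and the $\rho$-optimization with $x^d$ in place of $x$ alters the algebra but not the logic of the argument, which is why the extension is ``not difficult'' in substance even though the numbers change.
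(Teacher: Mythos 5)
Your plan is essentially the paper's own proof: the paper builds exactly this instance, with $n_1$ resources of type $(0,1)$, $n_1(n_1-1)$ of type $(1,1)$, and one core resource whose optimum congestion is realized by an auxiliary set of $n_2$ single-strategy players (your $\rho$ is just the ratio $n_2/(n_1+n_2)$), and it forces uniqueness of $K$ via the condition $cost_O(k-1)>cost_K(k)$ for all $k$, which is your ``pushed toward $s_i^K$ from every profile'' device. The only detail worth making explicit is that the core's nonzero congestion in $O$ requires those extra fixed players, which your reference to the \cite{CKS10} skeleton already supplies.
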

\begin{proof}
Consider an instance with $n=n_1+n_2$ players divided into two sets $P_1$ and $P_2$ with $|P_1|=n_1$ and $|P_2|=n_2$. Each player $i\in P_1$ has two strategies $s_i^K$ and $s_i^O$, while all players in $P_2$ have the same strategy $\overline{s}$.

There are three types of resources:
\begin{itemize}
\item[$\bullet$] $n_1$ resources $r_i$, $i\in [n_1]$, each with latency function $\ell_{r_i}(x)=r x^d$. Resource $r_i$ belongs only to $s_i^O$;
\item[$\bullet$] $n_1(n_1-1)$ resources $r'_{i,j}$, $i,j\in [n_1]$ with $i\neq j$, each with latency function $\ell_{r'_{ij}}(x)=r' x^d$. Resource $r'_{ij}$ belongs only to $s_i^K$ and to $s_j^O$;
\item[$\bullet$] one resource $r''$ with latency function $\ell_{r''}(x)=x^d$. Resource $r''$ belongs to $s_i^K$ for each $i\in [n_1]$ and to $\overline{s}$;
\end{itemize}
The cost of each player $i\in P_1$ adopting strategy $s_i^K$ when there are exactly $k$ players in $P_1$ adopting the strategy played in $K$ (and thus there are $n_1-k$ players in $P_1$ adopting the strategy played in $O$) is $cost_K(k)=(4n_1-3k-1)r'+(n_2+k)^2$ when $d=2$ and it is $cost_K(k)=(8n_1-7k-1)r'+(n_2+k)^3$ when $d=3$. Similarly, the cost of each player $i\in P_1$ adopting strategy $s_i^O$ when there are exactly $k$ players in $P_1$ adopting the strategy played in $K$ is $cost_O(k)=r+(n_1+3k-1)r'$ when $d=2$ and it is $cost_O(k)=r+(n_1+7k-1)r'$ when $d=3$.

We now want to select the parameters $r$ and $r'$ so that $K$ is the unique PNE of the game. This is true if, for any $k\in [n_1]$, it holds $cost_O(k-1)>cost_K(k).$ Such a condition is always verified for the following values of $r$ and $r'$:
\begin{itemize}
\item[$\bullet$] $r=\frac{2n_2^2+(n_1+1)(n_1+2n_2)}{2}+\gamma$ and $r'=\frac{n_1+2n_2}{6}$, when $d=2$,
\item[$\bullet$] $r=\frac{2n_2^3+(n_1+1)(n_1^2+3n_1 n_2+3n_2^2)}{2}+\gamma$ and $r'=\frac{n_1^2+3n_1 n_2+3n_2^2}{14}$, when $d=3$,
\end{itemize}
where $\gamma$ is an arbitrarily small positive value.

Next step is to select $n_1$ and $n_2$ so as to maximize the ratio $\frac{\CSUM(K)}{\CSUM(O)}=\frac{r' n_1(n_1-1)+(n_1+n_2)^{d+1}}{r n_1+r' n_1(n_1-1)+n_2^{d+1}}$ for $d=2,3$. By choosing $n_1=1.5595 n_2$ when $d=2$ and $n_1=1.0988 n_2$ when $d=3$ and letting $n_2$ go to infinity, we obtain the claim.\qed
\end{proof}

\subsection{Bounding the Price of Anarchy for Social Function Max}\label{secmax}
In this section we show how the primal-dual technique can be adapted also to the case in which the social function is the maximum of the players' payoffs. For the sake of brevity, we consider only the problem of bounding the PoA in linear congestion games with unweighted players. In order to deal with the maximum social function, we assume, without loss of generality, that player $n$ is the one paying the highest cost in $K$ and impose that, in $O$ no player pays more than one.

Thus, $LP(K,O)$ has the following form.

\begin{displaymath}
\begin{array}{rlr}
maximize & k\\
subject\ to\\
\displaystyle\sum_{e\in s_i^K}\left(\alpha_e K_e\right)-\sum_{e\in s_i^O}\left(\alpha_e (K_e+1)\right) & \leq 0, & \forall i\in [n]\\
\displaystyle\sum_{e\in s_i^K}\left(\alpha_e K_e\right) & \leq k, & \forall i\in [n-1]\\
\displaystyle\sum_{e\in s_n^K}\left(\alpha_e K_e\right) & = k\\
\displaystyle\sum_{e\in s_i^O}\left(\alpha_e O_e\right) & \leq 1, & \forall i\in [n]\\
\alpha_e & \geq 0, & \forall e\in E
\end{array}
\end{displaymath}
$DLP(K,O)$ is as follows.
\begin{displaymath}
\begin{array}{rlr}
minimize & \displaystyle\sum_{i\in [n]} z_i\\
subject\ to\\
\displaystyle\sum_{i:e\in s_i^K}\left(K_e(x_i+y_i)\right)-\sum_{i:e\in s_i^O}\left(x_i (K_e+1)-z_i O_e\right) & \geq 0, & \forall e\in E\\
\displaystyle\sum_{i\in [n]} y_i & \geq -1\\
x_i & \geq 0, & \forall i\in [n]\\
y_i & \geq 0, & \forall i\in [n-1]\\
z_i & \geq 0, & \forall i\in [n]
\end{array}
\end{displaymath}

We easily reobtain the upper bound on the PoA proven in \cite{CK05}.

\begin{theorem}
PoA$({\cal G})={\cal O}(\sqrt{n})$ for any $\cal G$ with unweighted players.
\end{theorem}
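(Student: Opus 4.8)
The plan is to bound the optimum of $LP(K,O)$ directly: since, for the social function Max, this optimum is an upper bound on PoA$({\cal G})$, it suffices to show that every feasible pair $(\alpha,k)$ satisfies $k=O(\sqrt n)$. Rescaling all coefficients $\alpha_e$ preserves the equilibrium conditions and the ratio of the two social values, so I may assume $c_i(O)\leq 1$ for every $i\in [n]$, and I relabel the players so that player $n$ attains the maximum cost in $K$, i.e. $c_n(K)=k$. The starting point is the Nash constraint for player $n$ (the primal inequality associated with the dual variable $x_n$): deviating to $s_n^O$ raises each congestion by at most one, hence \[ k=c_n(K)\leq \sum_{e\in s_n^O}\alpha_e(K_e+1)=\sum_{e\in s_n^O}\alpha_e K_e+\sum_{e\in s_n^O}\alpha_e. \]

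Next I would dispose of the two summands separately. Since $O_e\geq 1$ for every $e\in s_n^O$, the second sum obeys $\sum_{e\in s_n^O}\alpha_e\leq\sum_{e\in s_n^O}\alpha_e O_e=c_n(O)\leq 1$, so it contributes at most a constant. For the first sum I apply Cauchy--Schwarz together with the same estimate: \[ \sum_{e\in s_n^O}\alpha_e K_e\leq\Big(\sum_{e\in s_n^O}\alpha_e\Big)^{1/2}\Big(\sum_{e\in s_n^O}\alpha_e K_e^2\Big)^{1/2}\leq\Big(\sum_{e\in E}\alpha_e K_e^2\Big)^{1/2}. \] This reduces the task to controlling the global quantity $\sum_{e\in E}\alpha_e K_e^2=\CSUM(K)$.

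At this point I invoke the already-established bound on the price of anarchy for the \emph{sum} social function: by Theorem~\ref{poa-un} with $\epsilon=0$, any PNE $K$ satisfies $\CSUM(K)\leq\frac{5}{2}\CSUM(S^*)\leq\frac{5}{2}\CSUM(O)$, where $S^*$ minimizes $\CSUM$; since $\CSUM(O)=\sum_{i\in[n]}c_i(O)\leq n$, this gives $\sum_{e\in E}\alpha_e K_e^2\leq\frac{5}{2}n$. Combining the three displays yields $k\leq\sqrt{5n/2}+1=O(\sqrt n)$, which is the claim. The main obstacle, and the reason the bound is $\sqrt n$ rather than $n$, is the Cauchy--Schwarz step: it is the only genuinely nonlinear ingredient, and it is exactly what converts the linear-in-$n$ aggregate estimate $\CSUM(K)=O(n)$ into a sublinear bound on a single player's deviation cost. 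This is also why a player-independent dual assignment can only certify $O(n)$; a dual solution matching $O(\sqrt n)$ must use multipliers scaling like $1/\sqrt n$ and depending on the instance (on the typical value of $K_e$ over $s_n^O$), precisely the scale selected by the Cauchy--Schwarz optimum.
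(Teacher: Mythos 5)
Your proof is correct, but it takes a genuinely different route from the paper's. The paper stays inside the primal-dual schema: it exhibits an explicit dual feasible solution ($x_i=1/\sqrt{n}$, $y_i=0$, $z_i=2/\sqrt{n}$ for $i\in[n-1]$ and $x_n=1$, $y_n=-1$, $z_n=2\sqrt{n}$) and verifies the resulting four families of quadratic constraints in $(K_e,O_e)$, obtaining $\sum_i z_i\leq 4\sqrt{n}$ as the dual objective. You instead argue purely in the primal: the Nash constraint for the maximum-cost player, the normalization $c_n(O)\leq 1$ together with $O_e\geq 1$ for $e\in s_n^O$, a Cauchy--Schwarz step, and the already-established $5/2$ bound on the Sum price of anarchy (Theorem~\ref{poa-un} with $\epsilon=0$) combined with $\CSUM(S^*)\leq\CSUM(O)\leq n$. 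Every step checks out (just make sure you scale so that $\max_i c_i(O)$ equals $1$ rather than merely being at most $1$, so that $k$ really is the ratio), and you even obtain the slightly better constant $\sqrt{5/2}\approx 1.59$ in place of $4$. What your argument buys is brevity and modularity: the Sum-PoA theorem is used as a black box, and Cauchy--Schwarz cleanly isolates where the $\sqrt{n}$ comes from. What the paper's argument buys is self-containedness within the LP-duality framework it is advocating: the dual certificate needs no external theorem, and the tight dual constraints feed directly into the complementary-slackness recipe for constructing matching lower bounds. One small correction to your closing remark: the paper's dual multipliers achieving $O(\sqrt{n})$ depend only on $n$ and on which player attains the maximum, not on instance-specific quantities such as the typical value of $K_e$ over $s_n^O$; it is the asymmetry between player $n$ and the remaining players, rather than instance-dependence, that breaks the $O(n)$ barrier you describe.
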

\begin{proof}
By choosing $x_i=\frac{1}{\sqrt{n}}$, $y_i=0$ and $z_i=\frac{2}{\sqrt{n}}$ for each $i\in [n-1]$ and $x_n=1$, $y_n=-1$ and $z_n=2\sqrt{n}$, the dual inequalities may assume different forms depending of which of the following four situation occurs.

\begin{itemize}
\item[$\bullet$] $e\notin s_n^K\wedge e\notin s_n^O$. In this case, we have $K_e^2-K_e O_e-O_e+2O_e^2\geq 0$. Easy calculations show that this is always verified for any pair of non-negative integers $(K_e,O_e)$.
\item[$\bullet$] $e\notin s_n^K\wedge e\in s_n^O$. In this case, we have $K_e^2-K_e(\sqrt{n}+O_e-1)+2nO_e-\sqrt{n}+(O_e-1)(2O_e-1)\geq 0$. Easy calculations show that this is always verified for any pair of non-negative integers $(K_e,O_e)$ with $O_e\geq 1$.
\item[$\bullet$] $e\in s_n^K\wedge e\notin s_n^O$. In this case, we have $K_e^2-K_e(O_e+1)-O_e+2O_e^2\geq 0$. Easy calculations show that this is always verified for any pair of non-negative integers $(K_e,O_e)$ with $K_e\geq 1$.
\item[$\bullet$] $e\in s_n^K\wedge e\in s_n^O$. In this case, we have $K_e^2-K_e (\sqrt{n}+O_e) + 2nO_e-\sqrt{n}+(O_e-1)(2O_e-1)\geq 0$. Easy calculations show that this is always verified for any pair of non-negative integers $(K_e,O_e)$ with $K_e\geq 1$ and $O_e\geq 1$.
\end{itemize}

The claim follows since $\sum_{i=1}^n y_i=-1$ and $\sum_{i=1}^n z_i\leq 4\sqrt{n}$.\qed
\end{proof}

\subsection{Resource Allocation Games with Fair Cost Sharing}\label{secfair}
In this section we show how the primal-dual technique can be adapted also to the study of the efficiency of PNE in resource allocation games with fair cost sharing. We briefly recall that in these games players choose subsets of resources. Each resource $e\in E$ has an associated cost $c_e$ and the cost of each resource is equally shared among all players using it in a given strategy profile. For a strategy profile $S$, let $p_e(S)\in\{0;1\}$ be a boolean variable such that $p_e(S)=1$ if there exists at least a player using resource $e$ in $S$ and $p_e(S)=0$ otherwise.

For the PoS of these games, $LP(K,O)$ has the following form.

\begin{displaymath}
\begin{array}{rlr}
maximize & \displaystyle\sum_{e\in E}(c_e p_e(K))\\
subject\ to\\
\displaystyle\sum_{e\in s_i^K}\left(\frac{c_e}{n_e(K)}\right)-\sum_{e\in s_i^O}\left(\frac{c_e}{n_e(K)+1}\right) & \leq 0, & \forall i\in [n]\\
\displaystyle\sum_{e\in E}\left(c_e(H_{K_e}-H_{O_e})\right) & \leq 0\\
\displaystyle\sum_{e\in E}\left(c_e p_e(O)\right) & = 1, \\
c_e & \geq 0, & \forall e\in E
\end{array}
\end{displaymath}
$DLP(K,O)$ is as follows.
\begin{displaymath}
\begin{array}{rlr}
minimize & \gamma\\
subject\ to\\
\displaystyle\sum_{i:e\in s_i^K}\left(\frac{y_i}{n_e(K)}+z H_{K_e}\right)\\
-\displaystyle\sum_{i:e\in s_i^O}\left(\frac{y_i}{n_e(K)+1}+z H_{O_e}\right)+\gamma p_e(O) & \geq p_e(K), & \forall e\in E\\
y_i & \geq 0, & \forall i\in [n]\\
z & \geq 0
\end{array}
\end{displaymath}

\begin{theorem}
For any resource selection game with fair cost sharing $\cal G$, PoS$({\cal G})\leq H_n$.
\end{theorem}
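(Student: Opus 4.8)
The plan is to prove the bound through the primal-dual schema of Section~\ref{sectech1} alone: I will exhibit a feasible solution of $DLP(K,O)$ with objective value $H_n$, which upper bounds the optimum of $LP(K,O)$ and therefore PoS$({\cal G})$. The intuition guiding the choice of dual variables is the classical sandwich $\CSUM(K)\leq\Phi(K)\leq\Phi(O)\leq H_n\,\CSUM(O)$, where $\Phi(S)=\sum_{e\in E}c_e H_{n_e(S)}$ is Rosenthal's potential for fair cost sharing and $K$ is taken to be its global minimizer. Indeed, for every used resource one has $1\leq H_{n_e(S)}\leq H_n$, so $\CSUM(S)\leq\Phi(S)\leq H_n\,\CSUM(S)$; combining this with $\Phi(K)\leq\Phi(O)$ yields the bound. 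In dual terms this suggests that the constraint encoding $\Phi(K)\leq\Phi(O)$ (the one bearing the multiplier $z$) should carry the entire argument, while the Nash deviation inequalities (bearing the multipliers $y_i$) can be switched off.

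Accordingly, I would set $y_i=0$ for all $i\in[n]$, $z=1$ and $\gamma=H_n$. With these values the generic dual inequality simplifies to
$$H_{K_e}-H_{O_e}+H_n\,p_e(O)\geq p_e(K),$$
and I would verify it by a four-way case analysis on $(p_e(K),p_e(O))\in\{0,1\}^2$, using only the convention $H_0=0$, the monotonicity $H_1=1\leq H_{K_e}$ whenever $K_e\geq 1$, and $H_{O_e}\leq H_n$ whenever $O_e\leq n$. When $p_e(K)=1$ the term $H_{K_e}\geq 1$ already meets the right-hand side while $H_n\,p_e(O)-H_{O_e}\geq 0$ absorbs the remaining negative contribution; when $p_e(K)=0$ one has $H_{K_e}=0$ and the inequality reduces to $H_n\,p_e(O)\geq H_{O_e}$, again immediate from $O_e\leq n$.

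I do not anticipate a real obstacle: the entire content is the realization that the potential constraint alone suffices, and the only delicate point is the careful treatment of the degenerate resources (those used in exactly one of $K$ and $O$) through the convention $H_0=0$, so that all four parity cases are covered. As a by-product, complementary slackness identifies the tight dual constraints as those of resources used by a single player in $K$ (forcing $H_{K_e}=1$) or by all $n$ players in $O$ (forcing $H_{O_e}=H_n$); this pinpoints the building blocks — private unit resources versus one fully shared resource — of the classical $\Omega(H_n)$ lower bounding instance for fair cost sharing, indicating that the analysis is essentially tight.
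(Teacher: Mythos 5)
Your proposal is correct and coincides with the paper's own proof: the same dual solution $y_i=0$, $z=1$, $\gamma=H_n$ is used, and feasibility is verified via exactly the two inequalities $H_{K_e}\geq p_e(K)$ and $H_n\,p_e(O)\geq H_{O_e}$. The added discussion of the Rosenthal potential and of the tight constraints is consistent with, but not needed beyond, what the paper does.
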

\begin{proof}
Set $y_i=0$ for any $i\in [n]$, $z=1$ and $\gamma=H_n$. The claim easily follows from the fact that, for any $e\in E$, $H_{K_e}\geq p_e(K)$ and $H_n p_e(O)\geq H_{O_e}$.\qed
\end{proof}

Clearly, obtaining such a result is quite straightforward. We illustrate this further application of the primal-dual technique with the main purpose of showing that it can be fruitfully used also to contexts other than weighted congestion games.

\end{document}